\theoremstyle{plain}
\newtheorem{theorem}{Theorem}
\newtheorem{lemma}{Lemma}
\newtheorem{corollary}{Corollary}
\newtheorem{assump}{}
\newenvironment{assumption}[2]{\begin{assump}[#2]}{\end{assump}}
\theoremstyle{definition}
\newtheorem{remark}{Remark}
\newcommand{\set}[1]{\left\{#1\right\}}
\newcommand{\tonde}[1]{ \left( #1 \right)}
\newcommand{\quadre}[1]{ \left[ #1 \right]}
\newcommand{\eps}{\varepsilon}%
\newcommand{\abs}[1]{\left| #1 \right| }
\newcommand{\uno}{\mathbb{I}} 
\newcommand{\Inf}{\infty}
\newcommand{\N}{\mathbb{N}}
\newcommand{\Z}{\mathbb{Z}}
\DeclareMathOperator{\ex}{E}   %
\DeclareMathOperator{\var}{Var}%
\DeclareMathOperator{\cov}{Cov}%
\newcommand{\conv}{\to}
\newcommand{\convp}{\stackrel{\text{p}}{\longrightarrow}}
\newcommand{\ProcEps}{ \{\varepsilon_i\}_{i \in \mathbb{Z}}}
\newcommand{\ProcY}{ \{Y_i\}_{i \in \mathbb{Z}}}
\newcommand{\SNR}{\text{SNR}}
\definecolor{ChangeColor}{RGB}{0, 0, 0}     
\newcommand{\REV}[1]{\textcolor{ChangeColor}{#1}}
\definecolor{NOTEColor}{rgb}{1, 0, 1}
\setlist[description]{style=multiline,leftmargin=1.5cm}
\newcommand{\Kern}[1]{\mathcal{K}\left(#1\right)}
\newmdenv[
  backgroundcolor=gray!20,
  frametitle=,
  skipabove=\topsep,
  skipbelow=\topsep,
]{reminder}
\begin{document}
\thispagestyle{empty}

\title{\Large \MakeUppercase{A fast subsampling method for estimating the distribution of signal-to-noise ratio statistics in nonparametric time series regression models}}

\author{Francesco Giordano \\ Universit\`{a} di  Salerno, Italy \\ giordano@unisa.it
\and %
Pietro Coretto \\ Universit\`{a} di Salerno, Italy \\ pcoretto@unisa.it
}

\date{}

\begin{reminder}
{\centering \color{red} \textbf{\textsf{This is a preprint. The revised version of this paper is published as}}\\}
\vspace{5pt}
F. Giordano and P. Coretto (2020) %
{``A Monte Carlo subsampling method for estimating the distribution of signal-to-noise ratio statistics in nonparametric time series regression models''.} %
\textit{Statistical Methods \& Applications}, %
Vol. 29(3),   %
pp. 483--514. %
(\href{https://doi.org/10.1007/s10260-019-00487-5}{\sf doi: 10.1007/s10260-019-00487-5}).
\end{reminder}
{\let\newpage\relax\maketitle}


\begin{quote}
{\bf Abstract.~}\noindent Signal-to-noise ratio (SNR) statistics play a central role in many applications. A common situation  where SNR is studied is when a continuous time signal is sampled at a fixed frequency with some noise in the background.  While estimation methods exist, little is known about its distribution when the noise is not weakly stationary. In this paper we develop a nonparametric method to estimate the distribution of an SNR statistic when the noise  belongs to a fairly general class of stochastic processes that encompasses both short and long-range dependence, as well as nonlinearities. The method is based on a combination of smoothing and subsampling techniques. Computations are only operated at the subsample level, and this allows to manage the typical enormous sample size produced by modern data acquisition technologies. We derive asymptotic guarantees for the proposed method, and we show the finite sample performance based on numerical experiments. \REV{Finally, we propose an application to electroencephalography (EEG) data.}\\

\noindent {\bf Keywords:}~random subsampling, nonparametric smoothing, kernel regression, time series data, stochastic processes.

\noindent {\bf AMS classification:} 62G09 (primary); 62G08, 60G35, 62M86 (secondary)
\end{quote}

\section{Introduction}\label{sec:intro}

Signal-to-noise ratio (SNR) statistics are widely used to describe the strength of the variations of the signal relative to those expressed by the noise. SNR statistics are used to quantify diverse aspects of models where an observable quantity $Y$ is decomposed into a predictable or structural component $s$, often called signal or model, and a stochastic component $\eps$, called noise or error. Although the definition of SNR is rather general in this paper we focus on a typical situation where one assumes a sequence  \REV{$\ProcY$ is determined by
\begin{equation}\label{eq:Yt}
Y_i:= s(t_i) + \eps_i,
\end{equation}
where $i$ is a time index, $s(\cdot)$ is a smooth function of time evaluated  at the time point $t_i$ with $i\in \Z$, and $\ProcEps$ is some random sequence. Assume $t_i \in (0,1)$, however if a time series is observed at time points $t_i \in (a,b)$, these can be rescaled onto the interval $(0,1)$ without changing the results of this paper.}

Equation \eqref{eq:Yt} is a popular model in many applications that range  from physical sciences to engineering, biosciences, social sciences, etc. \citep[see][and references therein]{Parzen1966,Parzen1999}. Although we use the conventional term ``noise'' for $\eps_i$, this term may have a rich structure well beyond  what we would usually consider noise. Some of the terminology here originates from physical sciences where the following concepts have been first explored. Consider a  non stochastic signal $s(t)$ defined on the time interval $(0,1)$, and assume that $s(t)$ has a zero average level (that is $\int_0^1 s(t)dt=0$). The average ``variation'' (or magnitude) of the signal is quantified as  
\begin{equation}\label{eq:Psignal}
P_\text{signal}:= \int_{0}^{1} s^2(t) dt.
\end{equation}
In physical science terminology \eqref{eq:Psignal} is the average power of the signal, that is the ``energy'' contained in $s(\cdot)$ per time unit (if the reference time interval is $(a,b)$ the integral in \eqref{eq:Psignal} is divided by $(b-a)$). If the average signal level is not zero, $s(t)$ is  centered on its mean value, and then  \eqref{eq:Psignal} is computed. The magnitude, or the ``power'', of the noise component is given by $P_\text{noise}:= \var[\eps_i]$. The SNR of the process is the ratio 
\begin{equation}\label{eq:SNRratio}
\SNR:= 10 \log_{10} \frac{P_\text{signal}}{P_\text{noise}},
\end{equation}
expressed in decibels  unit. The SNR can also be defined as the ratio $(P_\text{signal} / P_\text{noise})$, however the decibel scale is more common. Low SNR implies that the strength of the random component of \eqref{eq:Yt} makes the signal $s(\cdot)$ barely distinguishable from the observation of $Y_i$. On the other hand, high SNR means that the  sampling  about $Y_i$ will convey enough information about the predictable/structural component $s(\cdot)$. 

In many analysis, SNR is a crucial parameter to be known. In radar detection applications \citep{Richards2014}, speech recognition \citep{Loizou2013}, audio and video applications of signal processing \citep{Kay1993}, it is crucial to build filtering algorithms that are able to reconstruct $s(\cdot)$ with the largest possible SNR. In neuroscience there is strong interest in quantifying the SNR of signals produced by neurons activity. In fact, the puzzle is that single neurons seem to have low SNR meaning that they emit ``weak signals'' that are still processed so efficiently by the brain system \citep{CzannerSarmaEtal2015}. In medical diagnostics, a physiological activity is measured and digitally sampled (e.g. fMRI, EEG, etc) with methods and devices that need to guarantee the largest SNR possible \citep{UllspergerDebener2010}. The historic discovery of the first detection of a gravitational wave announced on  11 February 2016 has been made possible because of decades of research efforts in designing instruments and measurement methods able to work in an extremely low SNR environment \citep{Kalogera2017}. These are just a few examples of the relevance of the SNR concept. The main goal of this paper is to define a SNR statistic, and provide an estimator for its distribution with proven statistical guarantees under general assumptions on the elements of \eqref{eq:Yt}.

\REV{Let $\mathcal{Y}_n:=\{ y_1, y_2, \ldots, y_n\}$ be sample values of  $Y$ observed at equally spaced time points $t_i=i/n$ for $i=1,2,\ldots,n$, with $t_i \in (0,1)$. That is, in this work we focus on situations where $Y$ is sampled at constant sampling rate (also known as fixed frequency sampling, or uniform design),  although the theory developed here can be extended to non-constant  sampling rates. } Let $\hat{s}(\cdot)$ and $\hat{\eps}$ be estimated quantities based on $\mathcal{Y}_n$. Consider the observed SNR statistic 
\begin{equation}\label{eq:Qn}
\widehat{SNR} := 10 \log_{10} \left( \frac{\frac{1}{n}\sum_{i=1}^{n} \hat s^2\left(\frac{i}{n}\right)}{\frac{1}{m}\sum_{i=1}^m\left(\hat{\eps}_i-\bar{\hat{\eps}} \right)^2} \right), 
\quad \text{with} \quad \bar{\hat{\eps}}=\frac{1}{m} \sum_{i=1}^m \hat{\eps}_i,
\end{equation}
\REV{for some choice of  an appropriate  sequence $\{m\}$ such that \REV{$m \to \infty$ and $m/n \to 0$ as $n \to \infty$}. In this paper we propose a subsampling strategy that consistently estimates the quantiles of the distribution  of $\tau_m (\widehat{SNR} - \text{SNR})$ for an appropriate sequence $\{\tau_m\}$ (see Theorem \ref{th:convergence_dist_Q}). These quantiles are used to construct simple confidence intervals for the   SNR parameter.}

In most applications the observed $\ProcY$ is treated as ``stable'' enough   so that  smoothing methods (typically linear filtering) are applied to get $\hat{s}(\cdot)$ and the error terms, $\hat{\eps}_i$, $i=1,2,\ldots,n$. Therefore, the general practice is to divide the observed data stream into sequential blocks of overlapping observations of some length  (time windowing), and for each block  $\widehat{SNR}$ is computed \citep[see][]{Kay1993, Weinberg2017}.  These SNR measurements are then used to construct its distribution to make 
inference statements about the underlying SNR. These windowing methods implicitly assume some sort of local stationarity and uncorrelated noise, but there is a lack of theoretical justification. However, data structures often exhibit strong time-variations and other complexities not consistent with these simplifying assumptions.  To our knowledge, a framework and a method for estimating the distribution of SNR statistics like \eqref{eq:Qn} with provable statistical guarantees does not exist in the literature.

\REV{The major contribution of this paper is a subsampling method for the approximation of the quantiles of  the distribution of the centered statistic $\tau_m (\widehat{SNR} - \text{SNR})$. The method is based on the observation that $\tau_m (\widehat{SNR} - \text{SNR})$ can be decomposed into the sum of  the following components:
\begin{equation}\label{eq:root_decomp_intro}
 -10\tau_m\left[ %
 \underbrace{\log_{10}\left(1+\frac{\hat{V}_m-\sigma_{\eps}^2}{\sigma_{\eps}^2}\right)}_{\text{error contribution}} %
 -\underbrace{\log_{10}\left(1+\frac{\frac{1}{n}\sum_{i=1}^{n} \hat s^2(i/n)-\int s^2(t) dt}{\int s^2(t) dt}\right)}_{\text{signal contribution}}\right],
\end{equation}
where %
$\hat{V}_m  = m^{-1}\sum_{i=1}^m\left(\hat{\eps}_i-\bar{\hat{\eps}} \right)^2$. In \eqref{eq:root_decomp_intro} the two components reflect the power contribution of the signal estimated based on $\hat{s}^2(\cdot)$, and the power contribution of  and the error term estimated in terms of  $\hat{V}_m$. In practice, the proposed method, formalized in Algorithm \ref{algo}, works as follows:  
(i) the observed time series is randomly divided into subsamples, that is random  blocks  of consecutive observations; (ii) in each subsample  the estimates $\hat{s}^2(\cdot)$ and $\hat{V}_m$ in \eqref{eq:root_decomp_intro} are computed; (iii) finally these subsample estimates are used to approximate the distribution of $\tau_m (\widehat{SNR} - \text{SNR})$ and its quantiles.}

Based on \cite{altman-1990}, a consistent kernel smoother with an  optimal bandwidth estimator is derived. The smoother does not require any further tuning parameters, even though the  stochastic  structure here is richer than that considered in the original paper by  \cite{altman-1990}.  The subsampling procedure extends the contributions of \cite{politis-romano-1994} and \cite{politis-etal-1999}. The main difference with the classical subsampling is that the method proposed in this paper does not require computations on the entire observed time series. Therefore the kernel smoothing is performed subsample-wise, and this  is particularly beneficial in applications when the sample size is significantly large. In sleep studies often an electrophysiological record (EEG) of the brain activity is performed in several positions of the scalp, where each sensor samples an electrical signal for  24 hours at 100Hz, implying  $n=8,640,000$ data points for each sensor \citep[see][]{KempZwindermanEtal2000}. Music is usually recorded at $44.1$Khz (ISO 9660), which implies that a stereo song of 5 minutes produces $n=26,460,000$ data points. This  approach has been explored by \cite{CorettoGiordano2016x} for the estimation of the dynamic range of music signals. However, the work of  \cite{CorettoGiordano2016x} deals with noise structures less general than those studied here. A further original element  of this work is that, although the setup for $\ProcEps$ does not exclude long memory regimes, the methods proposed do not require the identification of any long memory parameter.

The rest of the paper is organized as follows. In Section \ref{sec:model_assumption} we define and discuss the reference framework for $\ProcY$. In Section \ref{sec:estimation} the main estimation Algorithm \ref{algo} is introduced.  The smoothing step of Algorithm \ref{algo} is studied in Seciton \ref{sec:smoothing}, while the subsampling step is investigated in Section \ref{sec:subsampling}. In Section \ref{sec:numeric} we show finite sample results of the proposed method based on simulated data, moreover an application to real data is illustrated. Final remarks and conclusions are given in Section \ref{sec:conclusions}.  All proofs are given in the final Appendix.

\section{Setup and assumptions}\label{sec:model_assumption}

The framework \eqref{eq:Yt} underpins  a popular strategy to study experiments where a continuous time (analog)  signal $s(\cdot)$ is sampled at fixed time points $t_i$. The stochastic term $\ProcEps$ represents various sources of randomness. In some cases, the source and the structure of the random component are known, but this does not apply universally. A ubiquitous assumption about $\ProcEps$ is that it is white noise; sometimes the simplification pushes further towards Gaussianity \citep[see][and references therein]{Parzen1966}. However, in various applications the evidence of departure from this simplicity it is quite rich.

The most elementary source of randomness is the quantization noise, i.e. the added noise introduced by the quantization of the signal.  In Music, speech, EEG and many other applications, a voltage amplitude is recorded at fixed time intervals using a limited range of integer numbers. This is the so called Pulse Code Modulation (PCM), which is at the base of digital encoding techniques. The quantization noise is produced by the rounding error of the PCM sampling. Theoretically the quantization noise is a uniform white noise process, however, \cite{Gray1990} showed that the structure of the quantization noise varies a lot across applications and measurement techniques, and often the white noise assumption is too restrictive. Apart from quantization noise, the recorded signal may be affected by a number of disturbances unrelated to the signal. Take for example an EEG acquisition where electrical noise from the power lines is injected into the measuring device. In speech recording microphones capture stray radio frequency energy. Another example is that of wireless signal transmission affected by multi-path interference, that is: waves bounce off of and around surfaces creating unpredictable phase distortions. Complex effects like these happen in radar transmission too, where it is well known that the Gaussian white noise assumption is generally violated \citep[see][and references therein]{ConteMaio2002}.

Sometimes the stochastic component does not only include unpredictable external artifacts. There are cases where the structure of $\ProcEps$ is the result of several complex phenomena occurring within the system under study. In their pioneering works \cite{Voss_Clarke_1975, Voss_Clarke_1978}  found evidence of $1/f$--noise or similar fractal processes in recorded music. Similar evidence is documented in  \cite{LevitinChordiaEtal2012}.   $1/f$--noise is a  stochastic process where the spectral density follows the power law $c|f|^{-\beta}$, where $f$ is the frequency,  $\beta$ is the exponent, and $c$ is a scaling constant.  $\beta=1$ gives pink noise, that is just an example of such processes.  Depending on $\beta$ these forms of noise are characterized by slowly vanishing serial correlations and/or what is known as long memory. Many electronic devices found in data acquisition instruments introduce $1/f$-type noise \citep{Kogan1996, Weissman1988, KenigCross2014}.  Evidence of departure from linearity and Gaussianity in the transient components of music recordings was also found in \cite{brillinger-irizarry-1998}  and \cite{CorettoGiordano2016x}.

The main goal of this paper is to build an estimation method for the distribution of the SNR that works under the most general setting. Of course achieving universality is impossible, but here we set a model environment that is as rich as possible. Our model is restricted by the following assumptions:

\begin{assumption}{\bf A1}{}\label{ass:s}
The function $s(\cdot)$ has a continuous second derivative.
\end{assumption}

\begin{assumption}{\bf A2}{}\label{ass:eps}
The sequence $\ProcEps$ fulfills one of the following: 
\begin{description}

\item [\normalfont (SRD)]  $\ProcEps$ is a strictly stationary and $\alpha$-mixing process with mixing coefficients $\alpha(k)$, $\ex[\eps_i]=0$,  $\ex \abs{\eps_i^2}^{2+\delta} < +\Inf$, and $\sum_{k=1}^{+\Inf} \alpha^{\delta/(2+\delta)}(k)<\Inf$ for some $\delta>0$.

\item [\normalfont (LRD)] $\eps_i=\sum_{j=0}^{\infty}\psi_ja_{i-j}$ with $\ex[a_i]=0$, $\ex[a_i^4]<\infty\quad \forall i$, $\{a_i\}\sim i.i.d.$, $\psi_j\sim C_1j^{-\eta}$ with $\eta=\frac{1}{2}(1+\gamma_1)$, $C_1>0$ and $0<\gamma_1\le 1$.
\end{description}
\end{assumption}

Assumption \ref{ass:s} reflects a common smoothness requirement for $s(\cdot)$ which does not need further discussion. In most applications,  $s(\cdot)$ will represent the sum of possibly many harmonic components, or long term smooth trends. \ref{ass:eps} sets a wide range of possible structures for the stochastic component. Two regimes are considered here: short range dependence (SRD) and long range dependence (LRD).  SRD is a rather general $\alpha$-mixing assumption that  allows to overcome the usual linear process assumption. The latter  is essential to model fast decaying energy variations that is typical in some form of noise.  Assumptions \ref{ass:s} and \ref{ass:eps}-SRD are also considered in \cite{CorettoGiordano2016x} for the estimation of the dynamic range of music signals. However, in this paper, we are interested in SNR statistics, and we extend the analysis to the cases where LRD occurs. \ref{ass:eps}-LRD has the role to capture situations where the noise spectra shows long-range dependence; in practice this assumption accommodates the  $1/f$-type noise. The  LRD is controlled by  $\gamma_1$ which is between zero and 1. Note that \ref{ass:eps}-LRD assumes a linear structure while \ref{ass:eps}-SRD does not. SRD assumption allows for dependence, and the rate at which it vanishes it is controlled by $\delta$. Under SRD, in the infinite future, the terms of $\ProcEps$ act as an independent sequence. Hence SRD can capture many different forms of dependence but not long memory features. Is the linearity structure of LRD a strong assumption for the long-memory cases? The class of long-memory linear processes is well known in the literature, and in most cases,  LRD effects are more common to appear with a linear autoregressive structure. Moreover, \ref{ass:eps}-LRD  is compatible with the classical parametric models for LRD, e.g. the well known ARFIMA class, already used to capture the $1/f$-noise phenomenon. One could overcome the linearity assumption in LRD but at the expense of serious technical complications. It is important to stress that we are not interested in identifying SRD-vs-LRD, and we want to avoid the additional estimation of the LRD order. The latter is crucial in most parametric models for LRD. Assumption \ref{ass:eps} only defines plausible stochastic structures that can occur in the most diverse applications. Note that \ref{ass:eps}-LRD does not imply that $\ProcEps$ is a Gaussian process or a function of it, as it is assumed in \cite{Jach-etal-2012} and \cite{Hall-b-etal-1998}.

\section{The  smoothing-subsampling procedure}\label{sec:estimation}

\begin{algorithm}[!t]
\caption{blockwise smoothing}\label{algo}
\SetAlgoLined
\Input{data $\{y_1, y_2, \ldots, y_n\}$, constants $K \in \N$, $b \in \N$, and  $b_1 = o\left(b^{4/5}\right)$}
\Output{quantiles of the SNR statistic}
\BlankLine
\BlankLine
Draw without replacement and with uniform probability  a random sample $\mathcal{T}_K = \set{t_1, t_2, \ldots, t_K}$  from the set  $\set{1,2,\ldots, n-b+1}.$\\
\BlankLine
\BlankLine
\For{$t \in \mathcal{T}_K$}{
\begin{itemize}
\item Consider the subsample $\mathcal{Y}_t=\set{y_t, y_{t+1}, \ldots, y_{t+b-1}}$. Based on kernel methods estimate  $s(\cdot)$, and the signal power 
\REV{
\begin{equation}\label{eq:algo_sig_power}
\hat{U}_{n,b,t} = \frac{1}{b}\sum_{i=t}^{t+b-1} \left[\hat{s}\left(\frac{i-t+1}{b} \right)\right]^2.
\end{equation}
}
\item Compute $\hat{\eps}_{i} = y_{i}- \hat{s}((i-t+1)/b)$ for $i=t,t+1,\ldots,{t+b_1-1}$.\\

\item \REV{Estimate the noise variance on $b_1$ values
\begin{equation}\label{eq:algo_noise_var}
\bar{\hat{\eps}}_{b_1,t} = \frac{1}{b_1}\sum_{i=t}^{t+b_1-1}\hat{\eps}_i, \qquad 
\hat{V}_{n,b_1,t} = \frac{1}{b_1} \sum_{i=t}^{t+b_1-1} (\hat{\eps}_i-\bar{\hat{\eps}}_{b_1,t})^2.
\end{equation}}

\item Compute the subsample SNR statistic
\begin{equation}\label{eq_algo_snr}
\widehat{SNR}_{n,b,t} = 10\log_{10}\left(\frac{\hat{U}_{n,b,t}}{\hat{V}_{n,b_1,t}}\right).
\end{equation}
\end{itemize}
}
\BlankLine
Based on   $\{\widehat{SNR}_{n,b,t},\}_{t \in \mathcal{T}_K}$ quantiles  are computed as in  \eqref{eq:SNRquantiles}
\end{algorithm}
The SNR distribution is estimated performing Algorithm \ref{algo}.  This is a simple smoothing-subsampling procedure where for each subsample  $P_{\text{signal}}$ is consistently estimated by  $\hat{U}_{n,b,t}$, and  $P_{\text{noise}}$ is estimated by $\hat{V}_{n,b_1,t}$ on a secondary subsample taken from the previous one. Details and theoretical motivation of the procedure will be treated in Sections \ref{sec:smoothing} and \ref{sec:subsampling}. 
The distribution is constructed for the SNR expressed in decibel scale. 

\REV{The procedure is called ``Monte Carlo'' because the subsample is selection is randomized.  The latter reduces the huge number of subsamples to be explored.} Note that here none of the calculations involve computations over the entire observed sample $\mathcal{Y}_n$. The latter differs  from the classical subsampling for time series data introduced in  \cite{politis-romano-1994} and \cite{politis-etal-1999}. In the classical subsampling, one would estimate the variance of $\ProcEps$ based on the entire sample. This would require that the estimation of $s(\cdot)$ is performed globally on $\mathcal{Y}_n$. In Algorithm \ref{algo} both $s(\cdot)$, and the variance of $\ProcEps$ in \eqref{eq:algo_noise_var} are estimated blockwise. This blockwise smoothing strategy, where computations are performed only at the subsample level, has been  proposed in \cite{CorettoGiordano2016x}. The advantages over the classical subsampling are twofold. First, thanks to the increased data acquisition technology, in most of the applications mentioned in Section \ref{sec:intro}, $n$ scales in terms of millions or billions of data points. It is well known that kernel and other nonparametric smoothing methods become computationally intractable for such big sample sizes. In Algorithm \ref{algo} the computational complexity for the calculation of $\hat{s}(\cdot)$ is governed by the subsample size $b$, which is chosen much smaller than $n$ (see Theorem~\ref{th:conv_var_dist}). Second, the kind of signals we want to reconstruct may exhibit strong structural variations along the time axis,  therefore,  estimation of  $s(\cdot)$ on the entire sample would require the use of optimal kernel methods with local bandwidth increasing the computational burden even more. Working on smaller data chunks allows treating the signal locally. Therefore, simpler kernel methods based on global bandwidth within the subsampled block are better suited to capture the local structure of the signal.  Optimal estimation of $s(\cdot)$, and the random subsampling part of Algorithm \ref{algo} are developed in the next two Sections.


\section{Optimal signal reconstruction}\label{sec:smoothing}

Unless one has enough information about the shape of $s(\cdot)$, nonparametric estimators of functions with proven statistical properties are natural candidates to reconstruct the underlying signal. Our choice is the classical Priestley-Chao kernel estimator \citep[][]{priestley-chao-1972}, because it can be easily optimized in regression models where the error is not necessarily uncorrelated. The estimator for $s(\cdot)$ is defined as 
\begin{equation}\label{eq:ker}
\hat{s}(t) = \frac{1}{nh} \sum_{i=1}^n \Kern{\frac{t-i/n}{h}}y_i.
\end{equation}
The following assumption involving the kernel function $\Kern{\cdot}$ and the bandwidth $h$ is assumed to hold.
\begin{assumption}{\bf A3}{}\label{ass:K}
$\Kern{\cdot}$ is a  density function with compact support and symmetric about $0$. Moreover, $\Kern{\cdot}$ is Lipschitz continuous of some order. The bandwidth $h \in H=[c_1\Lambda_n^{-1/5}, \; c_2\Lambda_n^{-1/5}]$, where $c_1 < c_2$ are two positive constants such that: $c_1$ is arbitrarily small, $c_2$ is arbitrarily large. Define
\begin{equation}\label{Lambda} 
\Lambda_n:=%
\begin{cases}
  n                & \text{if \ref{ass:eps}-SRD holds,} \\
\frac{n}{\log n}   & \text{if \ref{ass:eps}-LRD holds with} \quad  \gamma_1=1,\\ 
n^{\gamma_1}       & \text{if \ref{ass:eps}-LRD holds with} \quad   0<\gamma_1<1.
\end{cases}
\end{equation}
Whenever $n \conv \infty$ it happens that $h \conv 0$ and $\Lambda_nh \conv \Inf$.
\end{assumption}
There are a number of possible choices for $\Kern{\cdot}$ satisfying  \ref{ass:K}, and  we will  use the Epanechnikov kernel for its well known efficiency properties. Setting an optimal bandwidth in \eqref{eq:ker} when the error term may be correlated requires special care. Here an optimal choice of $h$ is even more involved due to the fact that $\ProcEps$ may follow either the SRD or the LRD regime. The sequence \eqref{Lambda} has a role in managing this added complexity. \cite{altman-1990} developed the Priestley-Chao kernel estimator \eqref{eq:ker} with dependent additive errors, and showed that under serial correlation  standard bandwidth optimality theory does not apply. \cite{altman-1990} proposed to estimate an optimal $h$ based on a cross-validation function accounting for the dependence structure of $\ProcEps$. Altman's contribution deals with errors  belonging to the class of linear processes with finite memory. Therefore, Altman's assumptions do not allow the LRD case. Moreover, we consider the SRD assumption because it is typical for stochastic processes with a nonlinear model representation in time series framework. Finally, \cite{altman-1990} assumes that the true autocorrelation function of $\ProcEps$ is known which is not the case in real world applications. 

Let $\hat{\eps}_i = y_i - \hat{s}(i/n)$, and define the cross-validation objective function
\begin{equation}\label{eq:CV}
\text{CV}(h)= %
\quadre{1-\frac{1}{nh}\sum_{j=-M}^M \Kern{\frac{j}{nh}} \hat{\rho}(j)} ^{-2} %
\frac{1}{n} \sum_{i=1}^n \hat{\eps}_i^2.%
\end{equation}
The optimal bandwidth is estimated by minimizing \eqref{eq:CV}, that is 
\begin{equation*}
\hat h = {\text{argmin}}_{h \in H} \; \text{CV}(h).
\end{equation*}
The first term in \eqref{eq:CV} is the  correction factor proposed by \cite{altman-1990},  but replacing the true unknown autocorrelations with their sample counterparts $\hat{\rho}(\cdot)$ up to the $M$th order. $M$ is an additional smoothing parameter, but  Altman's contribution does not deal with its choice. Consistency of the optimal bandwidth estimator is obtained if $M$ increases at a rate smaller than the product $nh$. As in \cite{CorettoGiordano2016x}  $M$ is chosen so that the following holds.
\begin{assumption}{\bf A4}{}\label{ass:M}
Whenever $n\to \Inf$; then $M \to \infty$ and $M=O(\sqrt{nh})$.
\end{assumption}
Let $\text{MISE}(\hat{s};h)$ be the mean integrated square error of $\hat{s}(\cdot)$, that is 
\[
\text{MISE}(\hat s; h) = \int_0^1 \text{MSE}(\hat{s}(t);h)\;dt
\quad \text{where} \quad 
\text{MSE}(\hat{s}(t);h) = \text{E}[(\hat{s}(t) - s(t))^2].
\]
Let $h^\star$ be the global minimizer of $\text{MISE}(\hat{s};h)$. The next result states the optimality of the kernel estimator.
\begin{theorem}\label{th:kern_opt}
Assume {\ref{ass:s}}, {\ref{ass:eps}}, {\ref{ass:K}} and {\ref{ass:M}}.  $\hat{h}/{h^\star} \convp 1$ as $n \to \Inf$.
\end{theorem}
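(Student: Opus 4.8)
The plan is to reduce the ratio-consistency of $\hat h$ to an \emph{argmin-continuity} statement: on the bandwidth scale that turns $H$ into a fixed interval, I would show that the random criterion $\text{CV}(\cdot)$ and the deterministic target $\text{MISE}(\hat s;\cdot)$ converge to one and the same limit shape whose minimiser is unique and interior. Concretely, first reparametrise $h=\theta\,\Lambda_n^{-1/5}$, so that $\theta$ ranges over the fixed compact interval $[c_1,c_2]$. Under \ref{ass:s} and \ref{ass:K} the squared-bias part of $\text{MISE}(\hat s;h)$ admits the usual Priestley--Chao expansion with leading term $\tfrac14\mu_2(\mathcal K)^2\big(\int_0^1 (s''(t))^2\,dt\big)\,h^4$, while under \ref{ass:eps} the variance part is of order $(\Lambda_n h)^{-1}$, the regime-dependent rate being exactly what the sequence $\Lambda_n$ in \ref{ass:K} encodes. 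Hence $\Lambda_n^{4/5}\,\text{MISE}(\hat s;\theta\Lambda_n^{-1/5})\to m(\theta)=A\theta^4+B\theta^{-1}$ for positive constants $A,B$, a strictly convex function with unique interior minimiser $\theta^\star=(B/(4A))^{1/5}$. In particular $h^\star=\theta^\star\Lambda_n^{-1/5}\,(1+o(1))$ lies in the interior of $H$.

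The second step is to identify the mean of $\text{CV}(h)$ with $\text{MISE}(\hat s;h)$ up to an $h$-free constant. Writing $\text{ASE}(h)=\tfrac1n\sum_{i=1}^n\big(s(i/n)-\hat s(i/n)\big)^2$, I would expand $\tfrac1n\sum_i\hat\eps_i^2=\tfrac1n\sum_i\eps_i^2+\text{ASE}(h)+R_n(h)$, where the ``optimism'' remainder $R_n(h)=-\tfrac2n\sum_{i,l}W_{il}\,\eps_i\eps_l+(\text{mean-zero lower-order terms})$ with smoother weights $W_{il}=(nh)^{-1}\mathcal K\big((i-l)/(nh)\big)$. Its expectation is $-\tfrac{2}{nh}\sum_j\Kern{\tfrac{j}{nh}}\,\gamma(j)$, of the same order as the variance part of $\text{MISE}$; this is precisely the bias of the residual sum of squares that Altman's correction factor $\big[1-(nh)^{-1}\sum_j\Kern{j/(nh)}\rho(j)\big]^{-2}$ is built to remove. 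Expanding the factor to first order and multiplying then yields $\ex[\text{CV}(h)]=\var[\eps_i]+\text{MISE}(\hat s;h)\,(1+o(1))$ uniformly in $h\in H$. This is the place where \ref{ass:eps}-SRD and \ref{ass:eps}-LRD must be put on a common footing, each contributing the same $\Lambda_n$-governed order to both $\text{MISE}$ and $R_n$ so that the cancellation is uniform.

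The third step controls the two stochastic perturbations uniformly over $h\in H$. First, replacing $\rho$ by $\hat\rho$ in the correction factor: using the moment and dependence conditions in \ref{ass:eps} I would show $\sup_{|j|\le M}|\hat\rho(j)-\rho(j)|=o_p(1)$ at a rate which, combined with \ref{ass:M} (under which the sum spans $O(\sqrt{nh})$ lags), makes $\tfrac{1}{nh}\sum_{|j|\le M}\Kern{\tfrac{j}{nh}}\big(\hat\rho(j)-\rho(j)\big)$ negligible relative to the leading variance order --- this is exactly why $M$ must grow slower than $nh$. Second, the fluctuation of the criterion about its mean: I would bound $\sup_{h\in H}|\text{CV}(h)-\ex\,\text{CV}(h)|$ by $o_p$ of the spread of $\text{MISE}$ over $H$, through a variance estimate together with a stochastic-equicontinuity (chaining) argument that exploits the Lipschitz dependence of $\hat s(\cdot)$ on $h$ and the $\alpha$-mixing (SRD) or linear finite-fourth-moment (LRD) structure of the errors.

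Combining the three steps, the normalised criterion $\Lambda_n^{4/5}\big(\text{CV}(\theta\Lambda_n^{-1/5})-\var[\eps_i]\big)$ converges in probability, uniformly over $\theta\in[c_1,c_2]$, to the same strictly convex $m(\theta)$ that governs $\text{MISE}$. A standard argmin-continuity argument --- unique, interior, well-separated minimiser of a continuous convex limit --- then gives $\Lambda_n^{1/5}\hat h\convp\theta^\star$ and $\Lambda_n^{1/5}h^\star\to\theta^\star$, whence $\hat h/h^\star\convp 1$. I expect the main obstacle to lie in the third step under the LRD regime: establishing the uniform-in-$h$ negligibility of the fluctuations and the uniform consistency of $\hat\rho$ when the errors have long memory, since the sample autocovariances then converge at slow, possibly non-Gaussian rates, and the persistent dependence makes the equicontinuity bounds over the continuum of bandwidths delicate. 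Arranging these so that a single $\Lambda_n$ serves both the SRD and LRD regimes is the crux of the argument.
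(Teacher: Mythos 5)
Your plan follows the same route as the paper's proof --- the $\Lambda_n$-unified MISE expansion, Altman's bias-corrected CV, negligibility of plugging in $\hat\rho$, and an argmin conclusion --- but it leaves unproved exactly the two steps that carry the technical weight, so it has genuine gaps rather than being a complete argument. The first gap is your step 3(a). Under \ref{ass:eps}-LRD a statement of the form $\sup_{|j|\le M}|\hat\rho(j)-\rho(j)|=o_p(1)$ is not enough: you need a rate that survives multiplication by $M/(nh)$ and comparison with $r_n=\Lambda_n^{-4/5}$. The paper obtains it by decomposing $\hat\gamma(j)$ into four terms --- the cross-product of estimation errors (bounded via Cauchy--Schwarz and the MISE rate, $O_p(r_n)$), two signal-times-noise cross terms (Chebyshev, $O_p(\Lambda_n^{-1/2})$, using the growth of $\sum_{j\le n}\rho(j)$ in each regime), and the pure-noise term $\tfrac1n\sum_i\eps_i\eps_{i+j}$, whose fluctuations are controlled by Theorem 3 of \citet{Hosking-1996} on series of products of autocovariances; this last point is where the linearity in \ref{ass:eps}-LRD enters and where your worry about ``slow, possibly non-Gaussian rates'' is actually resolved. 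This yields $\hat\rho(j)=\rho(j)+O_p(r_n)+O_p(\Lambda_n^{-1/2})+O_p(j/n)$, after which \ref{ass:M} gives $\frac{M}{nh}O_p(r_n)+\frac{M}{nh}O_p(\Lambda_n^{-1/2})+\frac{M^2}{n}O_p\left(\frac{1}{nh}\right)=o_p(r_n)$. You correctly name this as the main obstacle, but you propose no mechanism to overcome it.

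The second gap is a mismatch between your steps 2 and 3. The cancellation $\ex[\mathrm{CV}(h)]=\sigma^2_{\eps}+\mathrm{MISE}(\hat s;h)(1+o(1))$ in your step 2 requires the correction factor to carry the correlation sum over all lags in the kernel's support, i.e.\ $|j|\le[nh/2]$, because that is the range over which the optimism $-\frac{2}{nh}\sum_j\Kern{j/(nh)}\gamma(j)$ accumulates. But the criterion actually minimised, \eqref{eq:CV}, truncates the sum at $M=O(\sqrt{nh})\ll nh$, and your step 3 only controls the estimation error on the retained lags $|j|\le M$; it never accounts for the omitted lags $M<|j|\le[nh/2]$. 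Under SRD this truncation is harmless because the correlations are summable, but under LRD the tail $\frac{1}{nh}\sum_{j=M+1}^{[nh/2]}\Kern{j/(nh)}\rho(j)$ is not obviously negligible. This is precisely the paper's $Q_1$ step, where the tail is shown to be $o_p(r_n)$ through the interplay of $M\sim\sqrt{nh}$, $h\asymp\Lambda_n^{-1/5}$ and $\gamma_1\le 1$; without it, your expectation identity applies to a different criterion from the one being minimised. (On the positive side, your explicit reparametrisation $h=\theta\Lambda_n^{-1/5}$, the uniform-in-$\theta$ fluctuation control, and the argmin-continuity ending are more careful than the paper's terse final step, which leans on equations (14) and (22) of \citet{altman-1990} for these points; but those refinements do not repair the two gaps above.)
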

The previous result relates $\hat h$ to the optimal global bandwidth for which convergence rate is known, that is  $O(\Lambda_n^{-1/5} )$.   Theorem~\ref{th:kern_opt} is equivalent to that given in  \cite{CorettoGiordano2016x}, however, the difference here is that $\ProcEps$ may well follow LRD. Therefore, proof of Theorem~\ref{th:kern_opt} (given in the Appendix) needs some further developments.   
\begin{remark} Theorem~\ref{th:kern_opt} improves the existing literature in several aspects. First of all, the proposed signal reconstruction is optimal (in the MISE sense) under both SRD and LRD. Its key feature is that one does not need to identify the type of dependence, that is  SRD vs LRD. There are only two smoothing tunings: $h$ that is estimated optimally, and $M$ fixed according to \ref{ass:M}. The  SRD regime is already treated in \cite{CorettoGiordano2016x}. Regarding LRD,  the result should be compared to  \citet{Hall-a-etal-1995}. The advantages of our approach compared to the latter are: (i) the method is simplified by  eliminating a tuning needed to deal with LRD,  that is the block length for the leave-$k$-out cross-validation in \cite{Hall-a-etal-1995}. This is because the Altman's cross-validation  correction in \eqref{eq:CV} already incorporates the dependence structure via $\hat{\rho}(\cdot)$, and  $M$ is able to correct \eqref{eq:CV}  without any further step identifying whether LRD or SRD occurs; (ii) here we do not assume existence of higher order moments of $\ProcEps$.
\end{remark}

\section{Monte Carlo approximation of  the subsampling distribution}\label{sec:subsampling}
\REV{In this section we exploit the subsampling procedure underlying Algorithm \ref{algo}. We call this procedure ``Monte Carlo'', because it is based on a random selection of subsamples, and here we provide a Monte Carlo approximation of the subsampling distribution of the statistic of interest.} Let us introduce the following  quantities:
\begin{equation}\label{eq:Vn}
V_n=\frac{1}{n}\sum_{i=1}^n\left(\eps_i-\bar{\eps} \right)^2, %
\qquad \text{with} \quad \bar{\eps}=\frac{1}{n} \sum_{i=1}^n \eps_i.
\end{equation}
Although the random sequence $\ProcEps$ is not observable, one can work with its estimate. Replace $\eps_i$ with $\hat \eps_i$ in the previous formula and obtain
\begin{equation}\nonumber
\hat V_n=\frac{1}{n}\sum_{i=1}^n\left(\hat \eps_i-\bar{\hat \eps} \right)^2, %
\qquad \text{with} \quad \bar{\hat \eps}=\frac{1}{n} \sum_{i=1}^n \hat \eps_i.
\end{equation}
The distribution of  a proper scaled and centered $\hat V_n$ can now be used to approximate the distribution of  $\tau_n(V_n-\sigma^2_{\varepsilon})$ where $\tau_n$ is defined in (\ref{tau}) and $\sigma_{\varepsilon}^2:=\text{E}[\varepsilon_t^2]$. One way to do this is to perform the  subsampling as proposed in  \cite{politis-etal-1999b} and \cite{politis-etal-1999}.

That is,  for all blocks of observations of length $b$ (subsample size) compute $\hat V_n$. However the number of possible subsample is huge even for moderate  $n$. Moreover, in typical cases where $n$ is of the order of millions or billions of samples, the computation of the optimal  $\hat{s}(\cdot)$  would require an enormous computer power. The problem is solved by performing the blockwise  smoothing of Algorithm \ref{algo} proposed in \cite{CorettoGiordano2016x}.
Therefore, the signal and the average error are estimated block-wise, so that the computing effort is only driven by $b$. This allows making the algorithm scalable with respect to $n$, a very important feature to process data from modern data acquisition systems.  Here we investigate the theoretical properties of the estimation Algorithm \ref{algo}. The formalization is similar to that given in  \cite{CorettoGiordano2016x}, however, here we deal with a different target statistic, and we face the added complexity of the existence of LRD regimes in  $\ProcEps$. 

First define
\begin{equation}\label{tau} 
\tau_n:=
\begin{cases}
n^{1/2}                              &  \text{if \ref{ass:eps}-SRD holds},\\ 
n^{1/2}                              &  \text{if \ref{ass:eps}-LRD holds with} \quad 1/2<\gamma_1\le 1,\\ 
\left(\frac{n}{\log n}\right)^{1/2}  &  \text{if \ref{ass:eps}-LRD holds with} \quad \gamma_1=1/2,\\ 
n^{\gamma_1}                          &  \text{if \ref{ass:eps}-LRD holds with} \quad 0<\gamma_1<1/2.
\end{cases}
\end{equation}
At a given time point $t$  consider a block of observations  of length $b$, and  the statistics computed in Algorithm \ref{algo}:
\begin{equation}\nonumber
V_{n,b,t}=\frac{1}{b}\sum_{i=t}^{t+b-1} (\eps_i- \bar{\eps}_{b,t})^2, %
\qquad \text{and} \qquad
\hat{V}_{n,b,t}=\frac{1}{b}\sum_{i=t}^{t+b-1} (\hat{\eps}_i-\bar{\hat{\eps}}_{b,t})^2,
\end{equation}
with $\bar{{\eps}}_{b,t}=b^{-1}\sum_{i=t}^{t+b-1} {\eps}_i$ and $\bar{\hat{\eps}}_{b,t}=b^{-1}\sum_{i=t}^{t+b-1}\hat{\eps}_i$. The empirical distribution functions of  $\tau_n (V_n -\sigma_{\varepsilon}^2)$, based on the true and estimated noise, respectively, are given by
\begin{align}
G_{n,b}(x)       =& \frac{1}{n-b+1}\sum_{t=1}^{n-b+1}\uno\set{\tau_b\left(V_{n,b,t}-V_n\right)\le x}, \nonumber\\
\hat{G}_{n,b}(x) =& \frac{1}{n-b+1}\sum_{t=1}^{n-b+1} \uno\set{\tau_b(\hat{V}_{n,b,t}-V_n)\le x}. \nonumber
\end{align}
$\uno\set{A}$ denotes the usual indicator function of the set $A$.  $\tau_b$ is defined in \eqref{tau}. Lemma \ref{lemma2} and \ref{prop:2} in the Appendix state that the subsampling based on statistic \eqref{eq:Vn}  is consistent  under both \ref{ass:eps}-SRD and  \ref{ass:eps}-LRD. Notice that results in \cite{politis-etal-1999} can only be used to deal with SRD. The LRD treatment is inspired to \citet{Hall-b-etal-1998} and \citet{Jach-etal-2012}. However, we improve upon their results in the sense that the  Gaussianity assumption for $\eps_t$ is avoided under  \ref{ass:eps}-LRD with $1/2<\gamma_1\le 1$.  The quantiles of the subsampling distribution also converges to the quantiles of the asymptotic distribution of $\tau_n(V_n - \sigma^2_{\eps})$. This  is a consequence of the fact that  $\tau_n (V_n-\sigma_{\varepsilon}^2)$ converges weakly  (see Remark  \ref{rem:distribuzione}). For $ \gamma_2 \in (0,1)$ the quantities  $q(\gamma_2)$, $q_{n,b}(\gamma_2)$ and $\hat{q}_{n,b}(\gamma_2)$ denote respectively the $\gamma_2$-quantiles with respect the distributions $G$,   (see Remark  \ref{rem:distribuzione}), $G_{n,b}$ and $\hat{G}_{n,b}$ respectively.  We adopt  the usual definition that  $q(\gamma_2)=\inf\set{x: G(x)\ge \gamma_2}$. Lemma \ref{cor:1} in the Appendix states the same consistency for the quantiles. The following remark covers the different cases (\ref{ass:eps}-SRD and \ref{ass:eps}-LRD) for the asymptotic distribution of $\tau_n (V_n - \sigma^2_{\eps})$.
\begin{remark}\label{rem:distribuzione}
By \ref{ass:eps} it can be shown that $\tau_n (V_n - \sigma^2_{\eps} )$ converges weakly to a random variable with distribution, say $G(\cdot)$, where $\sigma^2_{\eps} =\ex[\eps_t^2]$. Under \ref{ass:eps}-SRD, $G(\cdot)$ is a Normal distribution. $G(\cdot)$ is still a Normal distribution under \ref{ass:eps}-LRD with  $1/2<\gamma_1\le 1$, which follows from Theorem 4 of \citet{Hosking-1996}. The same Theorem implies  that $G(\cdot)$ is Normal under  \ref{ass:eps}-LRD with $\gamma_1=1/2$  when $a_t$ is normally distributed. Moreover,   $G(\cdot)$ is not Normal under \ref{ass:eps}-LRD with $0<\gamma_1<1/2$.
\end{remark}

A variant is to reduce the number of subsamples by introducing a random block selection with  $s(\cdot)$ estimated blockwise on subsamples of length $b$. Let $I_i$, $i=1,\ldots K$ be random variables indicating the initial point of every block of length $b$. We draw,  without replacement with uniform probabilities, the sequence $\set{I_i}_{i=1}^K$ from the set $I=\{1,2,\ldots,n-b+1\}$. The empirical distribution function of the subsampling variances of $\hat{\eps}_{t}$ over the random blocks is 
\begin{equation*}
\tilde{G}_{n,b}(x)=\frac{1}{K}\sum_{i=1}^{K} \uno \set{\tau_b \tonde{ \hat{V}_{n,b,I_i}-V_{n}} \leq x }.
\end{equation*}
In order to get the consistency of the subsample procedure both in the SRD and LRD cases, we consider two subsamples. The first one has a length of $b$ and we use it to estimate the signal, that is $s(\cdot)$. Instead, the second subsample, which is a subset of the first, has a length $b_1=o(b^{4/5})$ and we use this second subsample to estimate the variance and its distribution.  
The following result states the consistency of $\tilde{G}$ in approximating $G$.
\begin{theorem}\label{th:conv_var_dist}
Assume \ref{ass:s}, \ref{ass:eps}, \ref{ass:K} and \ref{ass:M}. Suppose that $\{a_t\}$, in \ref{ass:eps}, is Normally distributed when $0<\gamma_1\le 1/2$. Let $\hat s(t)$ be the estimate of $s(t)$ on a subsample of length $b$. Let $n\to \Inf$, $b\to \infty$, $b/n \to 0$, $b_1=o(b^{4/5})$ and  $K \to \Inf$, then $\sup_x\left|\tilde{G}_{n,b_1}(x)-G(x)\right| \convp 0$.
\end{theorem}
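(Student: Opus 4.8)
The plan is to separate the error coming from drawing only $K$ blocks from the underlying deterministic subsampling consistency. Introduce the fully enumerated (non-randomized) counterpart of $\tilde G_{n,b_1}$,
\[
\hat G_{n,b_1}(x) = \frac{1}{n-b+1}\sum_{t=1}^{n-b+1}\uno\set{\tau_{b_1}\tonde{\hat V_{n,b_1,t}-V_n}\le x},
\]
and write $\sup_x\abs{\tilde G_{n,b_1}(x)-G(x)} \le \sup_x\abs{\tilde G_{n,b_1}(x)-\hat G_{n,b_1}(x)} + \sup_x\abs{\hat G_{n,b_1}(x)-G(x)}$. The first term isolates the Monte Carlo randomization; the second is a deterministic subsampling consistency statement for the secondary block length $b_1$.

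For the deterministic term I would split $\hat G_{n,b_1}$ against its true-noise analogue $G_{n,b_1}(x)=(n-b+1)\inv\sum_t\uno\set{\tau_{b_1}(V_{n,b_1,t}-V_n)\le x}$. Since $b_1\to\infty$ (part of the subsampling requirement) and $b_1/n\to 0$ (because $b_1=o(b^{4/5})$ and $b/n\to 0$), Lemma \ref{lemma2} applies with $b_1$ in place of $b$ -- the only change being that the $\sim n$ starting points range over $1,\ldots,n-b+1$, which is asymptotically immaterial -- and yields $\sup_x\abs{G_{n,b_1}(x)-G(x)}\convp 0$ under both \ref{ass:eps}-SRD and \ref{ass:eps}-LRD. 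In the non-Normal regime $0<\gamma_1<1/2$ the Gaussianity of $\set{a_t}$ assumed in the statement is exactly what identifies $G$ and drives this convergence (Remark \ref{rem:distribuzione}).

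The crux is showing that the residual replacement is negligible at the scale $\tau_{b_1}$. Writing $\hat\eps_i-\eps_i = s(t_i)-\hat s(t_i)$, the difference $\hat V_{n,b_1,t}-V_{n,b_1,t}$ expands into a squared-error term $b_1\inv\sum_i(\hat s_i-s_i)^2$ and a cross term $2b_1\inv\sum_i(\eps_i-\bar\eps_{b_1,t})(\hat s_i-s_i)$. By Theorem \ref{th:kern_opt} the data-driven bandwidth is MISE-optimal, $\hat h\sim\Lambda_b^{-1/5}$, so the squared-error term is of the MISE order $\Lambda_b^{-4/5}$. The cross term needs more care, because the stochastic part of $\hat s_i-s_i$ is itself a kernel-weighted average of the very same $\eps_j$'s, making it a quadratic form in the noise: its deterministic-bias component is $O_p(\Lambda_b^{-2/5}\tau_{b_1}\inv)$, harmless after multiplication by $\tau_{b_1}$, whereas its leading variance component has mean of order $(\Lambda_b h)\inv\sim\Lambda_b^{-4/5}$. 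Hence the block average $(n-b+1)\inv\sum_t\abs{\tau_{b_1}(\hat V_{n,b_1,t}-V_{n,b_1,t})}$ is of order $\tau_{b_1}\Lambda_b^{-4/5}$, which vanishes precisely because $b_1=o(b^{4/5})$; this bound is binding in the case $0<\gamma_1<1/2$, where $\tau_{b_1}=b_1^{\gamma_1}$ and $\Lambda_b^{-4/5}=b^{-4\gamma_1/5}$, and slack in the remaining regimes. By Markov's inequality the fraction of blocks on which the perturbation exceeds any fixed $\eps>0$ is then $o_p(1)$, and since $G$ is continuous (Remark \ref{rem:distribuzione}) a sandwiching argument comparing the indicators at $x\pm\eps$ upgrades this to $\sup_x\abs{\hat G_{n,b_1}(x)-G_{n,b_1}(x)}\convp 0$; this is the Lemma \ref{prop:2}-type step.

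Finally, for the Monte Carlo term, conditional on the data the blocks $I_1,\ldots,I_K$ are drawn without replacement from $\set{1,\ldots,n-b+1}$, so for each fixed $x$ the quantity $\tilde G_{n,b_1}(x)$ is an average of $K$ exchangeable indicators with conditional mean $\hat G_{n,b_1}(x)$; a Hoeffding/Serfling inequality for sampling without replacement gives $\tilde G_{n,b_1}(x)-\hat G_{n,b_1}(x)\convp 0$ as $K\to\infty$. Combining the three bounds yields pointwise convergence $\tilde G_{n,b_1}(x)\convp G(x)$, and since all functions involved are monotone and the limit $G$ is continuous, P\'olya's theorem upgrades this to the uniform statement $\sup_x\abs{\tilde G_{n,b_1}(x)-G(x)}\convp 0$. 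I expect the main obstacle to be the residual-replacement control of the third paragraph in the long-memory regime: the interplay of the two block lengths -- signal smoothed on $b$, variance and scaling governed by $b_1$ -- must be calibrated exactly through $b_1=o(b^{4/5})$, and the quadratic-form structure of the cross term together with the non-Normal limit for small $\gamma_1$ (where Gaussianity of $\set{a_t}$ is invoked) is where the delicate bookkeeping lies.
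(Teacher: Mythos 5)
Your proposal is correct and follows essentially the same route as the paper's proof: the same decomposition into a Monte Carlo randomization error (handled conditionally on the data via concentration for sampling without replacement, where the paper cites Corollary 4.1 of \cite{Romano1989} and you invoke a Hoeffding/Serfling bound, which is the same idea) plus the enumerated subsampling consistency, the latter reduced to Lemma \ref{lemma2} together with the negligibility condition $\tau_{b_1}\Lambda_b^{-4/5}\to 0$ enforced by $b_1=o(b^{4/5})$, and finally the pointwise-to-uniform upgrade via continuity of $G$. Your explicit squared-error/cross-term bookkeeping is merely a spelled-out version of what the paper delegates to Lemma \ref{prop:2} and the cited argument of \cite{CorettoGiordano2016x}.
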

Proof of Theorem~\ref{th:conv_var_dist} is given in the Appendix. In analogy with what we have seen before we also establish consistency for the quantiles of $\tilde{G}(\cdot)$. Let $\tilde{q}_{n,b_1}(\gamma_2)$ be the $\gamma_2$-quantile with respect to  $\tilde{G}(\cdot)$.
\begin{corollary}\label{cor:2} %
Assume \ref{ass:s}, \ref{ass:eps}, \ref{ass:K} and \ref{ass:M}. Suppose that $\{a_t\}$, in \ref{ass:eps}, is Normally distributed when $0<\gamma_1\le 1/2$. Let $\hat s(t)$ be the estimate of $s(t)$ on a subsample of length $b$. Let $n\to \Inf$, $b\to \infty$, $b/n \to 0$, $b_1=o(b^{4/5})$ and  $K \to \Inf$, then $\tilde{q}_{n,b_1}(\gamma_2) \convp q(\gamma_2).$
\end{corollary}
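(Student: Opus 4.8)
The plan is to obtain Corollary~\ref{cor:2} from the uniform convergence of distribution functions established in Theorem~\ref{th:conv_var_dist}, via the classical device that converts convergence of distribution functions into convergence of their quantiles. The single extra ingredient needed is continuity and strict monotonicity of the limiting distribution $G(\cdot)$ at the point $q(\gamma_2)$; everything else is a short deterministic inversion argument. First I would record the two facts already in hand: Theorem~\ref{th:conv_var_dist} supplies $\sup_x \abs{\tilde{G}_{n,b_1}(x) - G(x)} \convp 0$, and Remark~\ref{rem:distribuzione} identifies $G(\cdot)$ in each regime of \ref{ass:eps}.

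Second, I would verify that $G(\cdot)$ is continuous and strictly increasing at $q(\gamma_2)$ in all cases. Under \ref{ass:eps}-SRD, and under \ref{ass:eps}-LRD with $1/2<\gamma_1\le 1$ (as well as $\gamma_1=1/2$ with Gaussian $a_t$), $G(\cdot)$ is a non-degenerate Normal law and the claim is immediate. Under \ref{ass:eps}-LRD with $0<\gamma_1<1/2$, $G(\cdot)$ is the non-Gaussian (Rosenblatt-type) limit of a normalized quadratic form of the linear long-memory process; this distribution is absolutely continuous with a density positive on the interior of its support, so again $G$ is continuous and strictly increasing there. Consequently, for the fixed $\gamma_2\in(0,1)$ the quantile $q(\gamma_2)$ is the unique root of $G(q)=\gamma_2$, and for every $\eta>0$ one has $G(q(\gamma_2)-\eta)<\gamma_2<G(q(\gamma_2)+\eta)$.

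Third, I would run the inversion argument. Fix $\eta>0$ and set $\delta:=\min\set{\gamma_2-G(q(\gamma_2)-\eta),\; G(q(\gamma_2)+\eta)-\gamma_2}$, which is strictly positive by the previous step. On the event $\set{\sup_x \abs{\tilde{G}_{n,b_1}(x)-G(x)} < \delta}$ one has simultaneously $\tilde{G}_{n,b_1}(q(\gamma_2)-\eta)<\gamma_2$ and $\tilde{G}_{n,b_1}(q(\gamma_2)+\eta)>\gamma_2$; by the definition $\tilde{q}_{n,b_1}(\gamma_2)=\inf\set{x:\tilde{G}_{n,b_1}(x)\ge \gamma_2}$ these force $q(\gamma_2)-\eta\le \tilde{q}_{n,b_1}(\gamma_2)\le q(\gamma_2)+\eta$. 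Since Theorem~\ref{th:conv_var_dist} makes the probability of that event tend to one, $\Pr\tonde{\abs{\tilde{q}_{n,b_1}(\gamma_2)-q(\gamma_2)}\le \eta}\to 1$, and as $\eta>0$ is arbitrary this is exactly $\tilde{q}_{n,b_1}(\gamma_2)\convp q(\gamma_2)$.

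The step I expect to require the most care, and the only genuinely non-routine one, is the continuity of $G$ at $q(\gamma_2)$ in the non-Gaussian regime \ref{ass:eps}-LRD with $0<\gamma_1<1/2$: the inversion argument collapses entirely if $G$ carries an atom or is flat near $q(\gamma_2)$. In the Normal regimes this is trivial, but in the Rosenblatt regime I would justify absolute continuity and strict monotonicity by appealing to the known spectral representation of the limiting law rather than re-deriving it, after which the remainder of the proof is the deterministic computation sketched above.
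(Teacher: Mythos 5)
Your proposal is correct and takes essentially the same route as the paper: the paper's proof of Corollary~\ref{cor:2} reduces, via Lemma~\ref{cor:1} and Lemma~\ref{lemma2}, to the classical quantile-inversion argument of Theorem~5.1 in \citet{politis-etal-1999}, applied to the uniform convergence supplied by Theorem~\ref{th:conv_var_dist} together with the continuity (and strict increase, justified through \citet{Hosking-1996}) of $G$. You simply write out explicitly the inversion step that the paper delegates to the cited subsampling literature, including the one genuinely delicate point --- continuity and strict monotonicity of the non-Gaussian limit when $0<\gamma_1<1/2$ --- which the paper handles by the same appeal to \citet{Hosking-1996}.
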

Proof of Corollary \ref{cor:2} is given in the Appendix.

\begin{remark}\label{rem:free}  Note that the second subsample of length $b_1$ is a consequence of the optimal rate for the estimation of $s(t)$ subsample-wise.
\end{remark}

\REV{\begin{remark}\label{Vnhat} Following the same arguments as in the proof of Theorem \ref{th:kern_opt}, we have that $V_n-\sigma_{\varepsilon}^2=O_p(\tau_n^{-1})$ and 
\[
\hat V_n-\sigma_{\varepsilon}^2=\left\{
\begin{array}{ll}
O_p(\tau_n^{-1})& \mbox{ SRD and LRD with } 5/8\le\gamma_1\le 1\\
O_p\left( \Lambda_n^{-4/5}\right) & \mbox{ LRD with } 0<\gamma_1<5/8.\\
\end{array}
\right.
\]
Hence, $\tau_m\tau_n^{-1}\rightarrow 0$ and $\tau_m\Lambda_n^{-4/5}\rightarrow 0$, as $n\rightarrow\infty$, if $m=o(n^{4/5})$. Since $b_1=o(b^{4/5})$, we can argue that the consistency  of the subsampling procedure still holds if we replace $V_n$ with $\hat V_n$ in the previous results. 
\end{remark}
}

\REV{Remark \ref{Vnhat} and  Theorem \ref{th:conv_var_dist} ensure  that the statistical functional $\hat V_{n}$ provides the consistency for the subsample procedure. Note that in the case of SRD and LRD with $5/8\le \gamma_1\le 1$ we would also have  the consistency if $b_1=b$. 
}

Now, by using the previous results, we can state that the subsample strategy is consistent to estimate the asymptotic distribution of  $\tau_m(\widehat{SNR}-SNR)$ where $\widehat{SNR}$ is defined in (\ref{eq:Qn}).  The statistic $\widehat{SNR}$ has the numerator and denominator depending on $n$ and $m$, respectively. The latter is  mimed  in the  the subsample procedure.  In fact,  a subsample of length $b$ is used for the estimation of the signal power,  while a  subsample of length $b_1$ is used to estimate the variance of the error term. 

\begin{theorem}\label{th:convergence_dist_Q}
Let $\mathcal{Y}_n:=\{ y_1, y_2, \ldots, y_n\}$ be  a sampling realization of $\ProcY$. Assume \ref{ass:s}, \ref{ass:eps}, \ref{ass:K} and \ref{ass:M}. Suppose that $\{a_t\}$, in \ref{ass:eps}, is Normally distributed when $0<\gamma_1\le 1/2$. Assume 
$n\to \Inf$, $b\to \infty$, $b/n \to 0$, $b_1=o(b^{2/5})$, $m=o(n^{2/5})$, $b_1/m\rightarrow 0$ and  $K \to \Inf$, then
\[
\sup_x|\mathbb{Q}_n(x)-\mathbb{Q}(x)|\stackrel{p}{\longrightarrow} 0
\] 
where
\[
\mathbb{Q}_n(x):= \frac{1}{K}\sum_{i=1}^K\mathbb{I}\left\{\tau_{b_1}(\widehat{SNR}_{n,b,I_i}-\widehat{SNR})\le x\right\}, \mbox{ with } \widehat{SNR}_{n,b,I_i}:=10\log_{10}\left(\frac{\hat{U}_{n,b,I_i}}{\hat{V}_{n,b_1,I_i}}\right)
\]
and $\mathbb{Q}(x)$ is the asymptotic distribution of $\tau_m(\widehat{SNR}-SNR)$.
\end{theorem}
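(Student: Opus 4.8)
The plan is to transfer the already established consistency of the \emph{variance} subsampling distribution (Theorem~\ref{th:conv_var_dist} and Remark~\ref{Vnhat}) to the SNR statistic by way of the logarithmic decomposition \eqref{eq:root_decomp_intro} together with a first-order linearization. Applying \eqref{eq:root_decomp_intro} to the full-sample root $\tau_m(\widehat{SNR}-\SNR)$ and to a generic subsample root $\tau_{b_1}(\widehat{SNR}_{n,b,I_i}-\widehat{SNR})$, each splits into an \emph{error contribution} built on the noise-variance estimate and a \emph{signal contribution} built on the signal-power estimate. First I would linearize both logarithms: since $\log_{10}(1+x)=x/\ln 10+O(x^2)$ with arguments that are $o_p(1)$ by Theorem~\ref{th:kern_opt} and Remark~\ref{Vnhat}, the error contribution scaled by $\tau_m$ equals $-\tfrac{10}{\sigma_{\varepsilon}^2\ln 10}\,\tau_m(\hat V_m-\sigma_{\varepsilon}^2)$ up to a quadratic remainder of order $\tau_m\,O_p((\hat V_m-\sigma_{\varepsilon}^2)^2)=o_p(1)$, and analogously at the subsample level with $\tau_{b_1}$ and $\hat V_{n,b_1,I_i}$.

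The second step is to discard the signal contribution. Arguing as in the proof of Theorem~\ref{th:kern_opt}, the signal-power estimate converges to $\int s^2$ at the rate dictated by the length-$b$ smoothing, and the rate conditions $m=o(n^{2/5})$ and $b_1=o(b^{2/5})$ are precisely calibrated so that $\tau_m\bigl(\tfrac1n\sum\hat s^2-\int s^2\bigr)\convp 0$ and $\tau_{b_1}\bigl(\hat U_{n,b,I_i}-\int s^2\bigr)\convp 0$. Hence the signal contribution drops out of both roots. It remains to absorb the centering: the subsample root is centered at $\widehat{SNR}$ rather than at $\SNR$, but $\tau_{b_1}(V_n-\sigma_{\varepsilon}^2)=O_p(\tau_{b_1}/\tau_n)\to 0$ and, by Remark~\ref{Vnhat}, $\hat V_m$ may replace $V_n$ without changing the limit, so both roots reduce to $-\tfrac{10}{\sigma_{\varepsilon}^2\ln 10}$ times the variance roots $\tau_m(\hat V_m-\sigma_{\varepsilon}^2)$ and $\tau_{b_1}(\hat V_{n,b_1,I_i}-V_n)$ respectively.

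With this reduction in hand I would close the argument by a continuous-mapping step through the affine map $z\mapsto -\tfrac{10}{\sigma_{\varepsilon}^2\ln 10}z$. By Remark~\ref{rem:distribuzione} the variance root $\tau_m(\hat V_m-\sigma_{\varepsilon}^2)$ converges weakly to $Z\sim G$, which is continuous in every case (Normal under \ref{ass:eps}-SRD and under \ref{ass:eps}-LRD with $1/2<\gamma_1\le1$, and the non-normal but continuous limit of \citet{Hosking-1996} when $0<\gamma_1\le1/2$), so $\mathbb{Q}$ is exactly the law of $-\tfrac{10}{\sigma_{\varepsilon}^2\ln 10}Z$ and is continuous. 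Theorem~\ref{th:conv_var_dist} gives $\sup_x|\tilde G_{n,b_1}(x)-G(x)|\convp 0$; taking the affine image and invoking the Monte~Carlo (random $K$ blocks) Glivenko--Cantelli argument already used for Corollary~\ref{cor:2} yields $\mathbb{Q}_n$ converging to $\mathbb{Q}$ pointwise, which upgrades to $\sup_x|\mathbb{Q}_n(x)-\mathbb{Q}(x)|\convp0$ by P\'olya's theorem using the continuity of $\mathbb{Q}$.

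I expect the main obstacle to be making the two-scale linearization rigorous and, above all, uniform over the randomly sampled starting points $I_i$: the signal is reconstructed on the length-$b$ block while the whole root is normalized at the slower rate $\tau_{b_1}$, so one must show that the signal remainder and the quadratic Taylor remainders, once multiplied by $\tau_{b_1}$, are $o_p(1)$ simultaneously across all $K$ blocks, and not merely for a single fixed block. This is most delicate in the LRD regime with small $\gamma_1$, where $\tau_{b_1}$ grows slowly but the fluctuations of $\hat V_{n,b_1,I_i}$ are governed by a heavy-tailed, non-normal limit; controlling these remainders jointly, and verifying that the degenerate centering terms do not accumulate over the $K$ subsamples, is the technical crux. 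Once it is settled, the passage to $\mathbb{Q}$ is the routine affine continuous mapping combined with Theorem~\ref{th:conv_var_dist}.
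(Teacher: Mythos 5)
Your second step—discarding the signal contribution at the subsample level via the claim $\tau_{b_1}\bigl(\hat U_{n,b,I_i}-\int_0^1 s^2(t)\,dt\bigr)\convp 0$—is where the argument breaks, and the failure is structural, not a matter of uniformity over blocks. A block starting at $I_i$ occupies the original time window $[I_i/n,\,(I_i+b-1)/n]$, whose length $b/n\to 0$. Since time is rescaled to $(0,1)$ inside the block and $s$ is continuous, the block kernel estimate reconstructs the signal on that shrinking window, so $\hat U_{n,b,I_i}$ converges to the \emph{local} signal power $s^2(t_{I_i})$ (this is exactly \eqref{eq:sb} in the paper's proof), not to the global power $\int_0^1 s^2(t)\,dt$. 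Over the randomness of $I_i$ the difference $\hat U_{n,b,I_i}-\int_0^1 s^2(t)\,dt$ is a non-degenerate $O_p(1)$ variable (approximately $s^2(U)-\int_0^1 s^2(t)\,dt$ with $U$ uniform on $(0,1)$), so after multiplication by $\tau_{b_1}\to\Inf$ it diverges whenever $s$ is not constant. Hence your reduction of the subsample root to the variance root $\tau_{b_1}(\hat V_{n,b_1,I_i}-V_n)$ does not hold, and the concluding affine continuous-mapping step has nothing to act on.

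The paper handles this by centering each subsample root at the local SNR: it writes $\tau_{b_1}(\widehat{SNR}_{n,b,I_i}-\widehat{SNR})=S_1-S_2+S_3$ with $S_1=\tau_{b_1}(\widehat{SNR}_{n,b,I_i}-SNR_{I_i})$, where $SNR_{I_i}=10\log_{10}\bigl(s^2(t_{I_i})/\sigma_{\varepsilon}^2\bigr)$; $S_1$ is then shown, by the same linearization you use, to behave like $\frac{\tau_{b_1}}{\sigma_{\varepsilon}^2}\bigl(\hat V_{n,b_1,I_i}-\sigma_{\varepsilon}^2\bigr)$. The term $S_2=\tau_{b_1}(\widehat{SNR}-SNR)=O_p(\tau_{b_1}/\tau_m)=o_p(1)$ uses $b_1/m\to 0$ (note the comparison is with $\tau_m$, not $\tau_n$ as in your sketch, because the noise variance in $\widehat{SNR}$ is computed on $m$ points). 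The crucial extra term, absent from your decomposition, is $S_3=\tau_{b_1}(SNR_{I_i}-SNR)$: the local-versus-global discrepancy, which diverges for any fixed block with $s^2(t_{I_i})\neq\int_0^1 s^2(t)\,dt$. The second half of the paper's proof is devoted precisely to this term: the counting argument leading to \eqref{eq:Nb}, showing that the fraction of blocks for which $S_3$ exceeds any threshold is asymptotically negligible, after which $\mathbb{Q}_n$ is split, Slutsky-style, into empirical distributions over $S_1$, $S_2$, $S_3$, of which only the $S_1$ part has a non-degenerate limit. Your remaining steps (Taylor expansion of the logarithms, $m=o(n^{2/5})$ to kill the full-sample signal term, Theorem \ref{th:conv_var_dist} plus continuity of the limit law for the uniform convergence) do agree with the paper, but without an argument playing the role of $S_3$ and \eqref{eq:Nb} the proposal does not prove the theorem.
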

Proof of Theorem \ref{th:convergence_dist_Q} is given in the appendix. 
\REV{Note that in Theorem \ref{th:convergence_dist_Q} we need $b_1=o(b^{2/5})$ instead of $b_1=o(n^{4/5})$ found in previous results.  The reason for this is that the statistical functional $\widehat{SNR}$ is more complex than  $\hat V_n$ and a different relative speed for the  secondary block size $b_1$ is required.
}
Theorem \ref{th:convergence_dist_Q} provides the theoretical justification for the consistency of the subsample procedure with respect to the statistic $\widehat{SNR}$. Let $q^{Q}(\gamma_2)$ and $\tilde q^{Q}_{n,b_1}(\gamma_2)\equiv \tilde q^{Q}_{n,b_1}(\gamma_2|\tau_{b_1})$ be the quantiles with respect to $\mathbb{Q}(x)$ and $\mathbb{Q}_n(x)$, respectively. Note that we write $\tilde q^{Q}_{n,b_1}(\gamma_2|\tau_{b_1})$ to highlight the dependence from the scaling factor $\tau_{b_1}$ as in Section 8 of \cite{politis-etal-1999b}.  The main goal is to do  inference for $SNR$ without estimating the long memory parameter, and without using the sample statistic $\widehat{SNR}$. In this way, we do not need to fix or estimate $m$. To do this, we use Lemma 8.2.1 in \cite{politis-etal-1999b}. First, $\mathbb{Q}(x)$ always has a \REV{strictly positive} density function, at least, in a subset of real line (see \cite{Hosking-1996} and references therein). So, by Lemma 8.2.1 in \cite{politis-etal-1999b}, and using the same arguments as in the proof of Corollary \ref{cor:2}, we have that
\begin{equation}
\label{quan01}
\tilde q^{Q}_{n,b_1}(\gamma_2|\tau_{b_1})=q^{Q}(\gamma_2)+o_p(1).
\end{equation}
Following the same lines as in Section 8 of  \cite{politis-etal-1999b}, we have that
\begin{equation}
\label{quan02}
\tilde q^{Q}_{n,b_1}(\gamma_2|1)=\frac{\tilde q^{Q}_{n,b_1}(\gamma_2|\tau_{b_1})}{\tau_{b_1}}+\widehat{SNR}.
\end{equation}
Note that $\tilde q^{Q}_{n,b_1}(\gamma_2|1)$ is the quantile with respect to the empirical distribution function $1/K\sum_{i=1}^K\mathbb{I}\left(\widehat{SNR}_{n,b,I_i}\le \frac{x}{\tau_{b_1}}+\widehat{SNR}\right)$. Therefore, by (\ref{quan01}) and (\ref{quan02}) it follows that
\[
\tilde q^{Q}_{n,b_1}(\gamma_2|1)\tau_{b_1}=\tau_{b_1}\widehat{SNR}+q^{Q}(\gamma_2)+o_p(1).
\]
Since $\widehat{SNR}=SNR+O_p(\tau_m^{-1})$ and $\tau_{b_1}/\tau_{m}\rightarrow 0$ when $n\rightarrow\infty$, we have that
\[
\tilde q^{Q}_{n,b_1}(\gamma_2|1)=SNR+\frac{q^{Q}(\gamma_2)}{\tau_{b_1}}+o_p(\tau_{b_1}^{-1}).
\]
Therefore, a confidence interval for $SNR$ with a nominal level of $\gamma_2$ is given by  
\begin{equation}\label{eq:SNRquantiles}
\left[\tilde q^{Q}_{n,b_1}(\gamma_2/2|1),\quad \tilde q^{Q}_{n,b_1}(1-\gamma_2/2|1)\right].
\end{equation}
It is possible to consider the methods of self-normalization as in \cite{Jach-etal-2012}, and the estimation of  the scaling factor $\tau$ as in \cite{politis-etal-1999b}. These methods  would lead to more efficient confidence bands, in the sense that these would be  first order correct with a rate of $\tau_m^{-1}$ instead of $\tau_{b_1}^{-1}$. However, this would require the estimation of the unknown constants as in \cite{Jach-etal-2012}.

\begin{remark}\label{rem:}
In Theorems \ref{th:conv_var_dist} and \ref{th:convergence_dist_Q} the definition of $\tilde{G}(\cdot)$ and $\mathbb{Q}_n(\cdot)$ depend on statistics ($V_n$ and $\widehat{\text{SNR}}$) computed on the whole sample. On the other hand these two theorems give the theoretical framework for computing confidence interval as in \eqref{eq:SNRquantiles}, and this calculations will not require any calculation on the entire sample. %
\REV{%
In other words, $V_n$, $\hat V_n$, and $\widehat{SNR}$ are needed to center the involved  distributions, but not needed to approximate the quantiles as in  \eqref{eq:SNRquantiles}. Therefore, in this work  these quantities only have a theoretical role to show that the subsample procedure does not produce  degenerate asymptotic distributions.%
}
\end{remark}

\section{Numerical experiments}\label{sec:numeric}
In this section we present numerical experiments on simulated data. The assumptions given in this paper are rather general, and it is not possible to design a computer experiment that can be considered representative of all the kind of structures consistent with \ref{ass:s}--\ref{ass:M}. Here we assess the performance of Algorithm \ref{algo} under different scenarios for the structure of the noise term.  In order to do this we keep the structure of true signal fixed, and we investigate three variations of the noise data generating process. Data are sampled at fixed sampling frequency set at Fs = 44100Hz, a common value  in audio applications. Let $[0,T]$ be the data acquisition time interval, where $T$ is the duration of  the simulated signal in seconds. The signal is sampled at time $t=t_1, t_2, \ldots, T$, with $t_i = (i-1) / \text{Fs}$ for $i=1,2,\ldots, T{\times}\text{Fs}$, as follows
\[
y_{i} =  A_s \, \sin(2 \pi \,  50 \, t_i)   + \eps_i,   \qquad \text{with}\quad %
{i=1,2,\ldots T{\times}\text{Fs}},
\]
and $t_i = i/({T{\times}\text{Fs}})$. Therefore, the signal consists of a sinusoidal wave that produces energy at 50Hz. The signal power is equal to $A_s^2/2$, where $A_s$ is a scaling constant properly tuned to achieve a given true SNR. We set $T=30\text{sec}$ (implying $n=1,323,000$), and we consider the following  three cases for the noise. 
\begin{description}

\item[AR] The noise is generated from an AR(1)  process with independent normal innovations. This produces serial correlation in the error term and represents a case for SRD. In particular $\eps_i = -0.7\eps_{i-1} + u_i$,  where $\{u_i\}$ is an i.i.d. sequence with distribution $\text{Normal}(0, A_\eps)$, where $A_{\eps}$ is set to achieve a certain SNR. 

\item[P1] The random sequence $\{\eps_i \}$ has power spectrum equal to  $P(f) = A_{\eps} / f^{\beta}$, where $P(f)$ is the power spectral density at frequency $f$Hz. Here $\beta=0.2$ which induces some moderate LRD in $\{\eps_i\}$. The scaling constant $A_\eps$ is set to achieve the desired SNR.

\item[P2] same as \textbf{P1} but with $\beta=0.6$. This design introduces a much stronger LRD.
\end{description}
In \textbf{P1} and \textbf{P2} the noise has a so-called $1/f^\beta$-``power law'' where $\beta$ controls the amount of long range dependence. Larger values $\beta$ implies slower rate of decays for the serial correlations.  For $\beta=1$ pink noise is obtained. Values of $\beta \in [0,1]$ give a behavior between the white noise and the pink noise. In the case \textbf{P1}, $\gamma_1=1-\beta=0.8$ in \ref{ass:eps}-LRD. So, the asymptotic distribution of $\tau_n(V_n-\sigma_{\varepsilon}^2)$ is Normal. Whereas, in the case \textbf{P2}, $\gamma_1=1-\beta=0.4$. This implies that  the asymptotic distribution of $\tau_n(V_n-\sigma_{\varepsilon}^2)$ is not Normal (see Remark \ref{rem:distribuzione}). 

\textbf{P1} and \textbf{P2} are simulated based on the  algorithm of  \cite{TimmerKoenig1995}  implemented in the \textsf{tuneR} software of \cite{CRAN_tuneR}.  For each of the three sampling designs we consider two values for the true SNR: 10dB and 6dB. In most applications an SNR = 6dB is considered a rather noisy situations. We recall that at 6dB  the signal power is circa only  four times the variance of the noise, and 10dB means that the signal power is ten times the noise variance. There is a challenging aspect of these designs. The case with  \textbf{P2} noise and SNR = 6dB, is particularly difficult for our method. In fact, \textbf{P2} puts relatively large amount of variance (power) at low frequencies around 50Hz, so that the signal is not well distinguished from some spectral components of the noise. 
The two parameters of Algorithm  \ref{algo} are $b$ and $K$. We consider three  settings for the subsample window: $b=10\text{ms}=441$ samples,   $b=15\text{ms}=662$ samples, %
\REV{%
and $b$  estimated based on the method proposed in \cite{GoetzeRackauskas2001}. In the latter  case the optimal $b$ is computed over a grid ranging from $b=2$ms to $b=20$ms. 
In many applications is not easy to fix a value for $b$. However, in certain situations researchers have an idea about  the structure of the signal, and the time series is windowed with blocks of a certain length. In applications where  the  underlying signal is expected to be composed by harmonic components, the usual practice is to take blocks of size approximately equal to the  period of the harmonic component with the lowest expected frequency. The rational is to take the smallest window size so that each block is still expected to carry  some information about the low frequency  components.
For example for speech data usually blocks of 10ms are normally considered \citep{HaykinKosko2001}, whereas for music data  50ms is a common choice \citep{WeihsEtal2016}.   Note that the artificial data here have an harmonic component at 50Hz with a period of 20ms, and  we consider the fixed alternatives $b=10$ms and $b=15$ms as a robustness check. 
}%
We set $K=200$, of course larger values of $K$ would ensure less subsample induced variability. The $b_1$, i.e. the window length of the secondary subsample needed to estimate the distribution of the sampling variance, is set according to Theorem~\ref{th:conv_var_dist}. This is achieved by setting $b_1 = [b^{2/5}]$. For each combination of noise type, SNR, and $b$ we considered 500 Monte Carlo replica and we computed statistics to assess the performance of the procedure. Two aspects of the method are investigated corresponding to the two main contributions of the paper.

\begin{table}[!h]
\centering
\caption{%
Monte Carlo averages for the Mean Square Error (MSE) of the estimated signal power. Standard errors for the Monte Carlo averages are given in parenthesis. The block length $b$ is expressed in milliseconds [ms]. The case when $b=$opt corresponds to the estimated optimal block length reported in Table \ref{tab:mc_opt_b}.
}%
\label{tab:mse_power}
\begin{tabular}{cccc}
  \toprule
  Error Model       & True SNR & $b$ [ms] & MSE              \\
  \midrule
  AR                & 6        & 10       & 0.0016(0.000010) \\ 
  AR                & 6        & 15       & 0.0008(0.000005) \\ 
  AR                & 6        & opt      & 0.0016(0.000030) \\ 
  AR                & 10       & 10       & 0.0046(0.000024) \\ 
  AR                & 10       & 15       & 0.0022(0.000012) \\ 
  AR                & 10       & opt      & 0.0049(0.000074) \\
  \midrule
  P1                & 6        & 10       & 0.0018(0.000011) \\ 
  P1                & 6        & 15       & 0.0012(0.000008) \\ 
  P1                & 6        & opt      & 0.0020(0.000024) \\ 
  P1                & 10       & 10       & 0.0051(0.000030) \\ 
  P1                & 10       & 15       & 0.0037(0.000024) \\ 
  P1                & 10       & opt      & 0.0049(0.000058) \\
  \midrule
  P2                & 6        & 10       & 0.0024(0.000018) \\ 
  P2                & 6        & 15       & 0.0020(0.000015) \\ 
  P2                & 6        & opt      & 0.0025(0.000027) \\ 
  P2                & 10       & 10       & 0.0060(0.000038) \\ 
  P2                & 10       & 15       & 0.0051(0.000036) \\ 
  P2                & 10       & opt      & 0.0060(0.000056) \\
  \bottomrule
\end{tabular}
\end{table}

\begin{table}[!h]
\centering
\caption{Monte Carlo averages for the optimal $b$ estimated using the method proposed in \cite{GoetzeRackauskas2001}. Standard errors for the Monte Carlo averages are given in parentesis. The block length $b$ is expressed in milliseconds [ms].}
\label{tab:mc_opt_b}
\begin{tabular}{ccr}
  \toprule
  Error & True & Optimal $b$ [ms]       \\
  Model & SNR  & {}                     \\
  \midrule
  AR          & 6    & 9.44(0.193)      \\ 
  AR          & 10   & 8.93(0.164)      \\
  \midrule
  P1          & 6    & 8.18(0.168)      \\ 
  P1          & 10   & 11.05(0.162)     \\
  \midrule
  P2          & 6    & 9.36(0.198)      \\ 
  P2          & 10   & 10.80(0.172)     \\ 
	\bottomrule
\end{tabular}
\end{table}

\begin{table}[!h]
  \centering
  \caption{Monte Carlo averages for the absolute deviation of the estimated quantiles of the SNR distribution from the true counterpart. SNR is expressed in decibels. The true quantiles of the SNR distribution are computed based on Monte Carlo integration. Standard errors for the Monte Carlo averages are given in parenthesis.  The block length $b$ is expressed in milliseconds [ms]. The case when $b=$opt corresponds to the estimated optimal block length reported  Table \ref{tab:mc_opt_b}.}
  \label{tab:qmae}
  \begin{tabular}{cccccccc}
    \toprule
    Error & True & $b$  & \multicolumn{5}{c}{Quantile level}                                  \\
    \cmidrule{4-8}
    Model & SNR  & [ms] & 0.1         & 0.25        & 0.5         & 0.75        & 0.90        \\
    \midrule
    AR    & 6    & 10   & 4.99(0.013) & 2.99(0.022) & 0.28(0.010) & 1.50(0.008) & 2.09(0.007) \\ 
    AR    & 6    & 15   & 1.99(0.008) & 1.30(0.011) & 0.22(0.007) & 0.80(0.006) & 1.13(0.006) \\ 
    AR    & 6    & opt  & 4.29(0.125) & 2.19(0.059) & 0.28(0.010) & 0.94(0.017) & 1.38(0.025) \\ 
    AR    & 10   & 10   & 5.00(0.012) & 3.04(0.021) & 0.28(0.010) & 1.49(0.008) & 2.08(0.007) \\ 
    AR    & 10   & 15   & 2.01(0.007) & 1.30(0.011) & 0.19(0.007) & 0.81(0.006) & 1.13(0.006) \\ 
    AR    & 10   & opt  & 4.70(0.113) & 2.43(0.053) & 0.30(0.011) & 1.01(0.014) & 1.45(0.021) \\ 
    \midrule
    P1    & 6    & 10   & 5.37(0.019) & 3.09(0.024) & 0.34(0.011) & 1.87(0.007) & 2.52(0.006) \\ 
    P1    & 6    & 15   & 2.41(0.015) & 1.26(0.013) & 0.37(0.010) & 1.18(0.007) & 1.68(0.007) \\ 
    P1    & 6    & opt  & 4.58(0.094) & 2.13(0.047) & 0.39(0.013) & 1.27(0.015) & 1.76(0.019) \\ 
    P1    & 10   & 10   & 5.60(0.022) & 3.07(0.026) & 0.39(0.012) & 1.93(0.009) & 2.56(0.007) \\ 
    P1    & 10   & 15   & 2.77(0.025) & 1.32(0.015) & 0.49(0.013) & 1.35(0.007) & 1.86(0.008) \\ 
    P1    & 10   & opt  & 4.46(0.087) & 2.20(0.048) & 0.45(0.012) & 1.54(0.012) & 2.05(0.013) \\ 
    \midrule
    P2    & 6    & 10   & 5.19(0.031) & 2.36(0.026) & 0.69(0.016) & 2.52(0.011) & 3.58(0.012) \\ 
    P2    & 6    & 15   & 2.83(0.028) & 0.83(0.019) & 0.99(0.015) & 2.17(0.012) & 3.05(0.013) \\ 
    P2    & 6    & opt  & 4.00(0.082) & 1.41(0.043) & 0.81(0.017) & 2.02(0.018) & 2.89(0.020) \\ 
    P2    & 10   & 10   & 5.16(0.035) & 2.02(0.031) & 1.03(0.017) & 2.65(0.011) & 3.56(0.012) \\ 
    P2    & 10   & 15   & 2.82(0.038) & 0.64(0.018) & 1.39(0.016) & 2.37(0.011) & 3.07(0.011) \\ 
    P2    & 10   & opt  & 4.29(0.082) & 1.46(0.049) & 1.12(0.020) & 2.32(0.014) & 3.05(0.015) \\ 
    \bottomrule
  \end{tabular}
\end{table}

The first contribution of the paper is  Theorem~\ref{th:kern_opt}, where optimality and consistency of the Priestley-Chao kernel estimator is established under rather general assumptions on the error term. The kernel smoothing is used in Algorithm \ref{algo} to estimate the signal power in the numerator of \eqref{eq_algo_snr}. In Table \ref{tab:mse_power} we report the Monte Carlo averages for the Mean Square Error (MSE) of the estimated signal power. Going from the simplest \textbf{AR} to the complex \textbf{P2} noise model there is an increase in MSE as expected. The longer $b=15$msec subsample window always produced better results. The apparently counterintuitive evidence is that for larger amount of noise (lower SNR), the signal power is slightly better estimated. In order to understand this, note that the noise (in all three cases) produces most of its power in a low frequency region containing the signal frequency (i.e. 50Hz). In the lower  noise case there is still a considerable amount of noise acting at low frequency that the adaptive nature of the kernel smoother is not able to recognize properly. %
\REV{%
In Table \ref{tab:mc_opt_b} we report the estimated average $b$ with its Monte Carlo standard error. The estimated  $b$ is always near 10ms, and the latter  produced results that are only  slightly worse  than  those obtained for fixed $b=15$ms. 
}%

The second contribution of the paper is the consistency result (see Theorem~\ref{th:convergence_dist_Q} and related results) for the distribution of the SNR statistic.  In order to measure the quality of method one needs to define the ground truth in terms of the sampling distribution of the target SNR statistic. The derivation of an expression for such a distribution would be an  analytically intractable. Therefore, we computed the quantiles of the true SNR statistic based on Monte Carlo integration, and in Table \ref{tab:qmae} we report the average absolute differences between estimated quantiles and the true counterpart. Based on Corollary \ref{cor:2} the convergence of the distribution of the SNR is mapped into its quantiles, therefore this makes sense. Comparison involves five different quantile levels to assess the behavior of the procedure both in the tails and in the center of the distribution. The average deviations of Table \ref{tab:qmae} are computed in decibels. Overall the method can capture the center of the distribution pretty well in all cases. The estimation error increases in the tails of the distribution as one would expect. The right tail is estimated better than the left tail. In all cases  the performance in the tails of the SNR distribution is better captured with a $b=15$msec window, although in the center of the distribution the differences implied by different values of  $b$ are much smaller. Going from SNR = 6 to SNR = 10 results are clearly better on the left tail of the distribution especially in the case \textbf{P2}. Again the estimated version of  $b$ pushes the corresponding results towards the $b=10$ms case. 

Every method has its own tunings, and the evidence here is that $b$ has some effects on the proposed method. The major impact of $b$ is about the tails of the SNR distribution. \REV{The selection of $b$ based on the method proposed by \cite{GoetzeRackauskas2001} deliver a fully satisfying solution that does not require any prior knowledge on the data  structure. The only drawback of estimating $b$ is that the overall algorithm needs to be executed for several candidate values of $b$.}
As final remark we want to stress that the method proposed here is designed to cope with much larger values of $n$. In this experiment the sampling is repeated a number of  times to produce Monte Carlo estimates, therefore we had to choose an $n$ compatible with reasonable computing times according to the available hardware. A limited number of trials with $T$ up to several minutes  (which implies that $n$ goes up to several millions) have been  successfully tested without changing the final results. Therefore, we can conclude that the algorithm scales well with the sample size.


\section{Application to EEG data}\label{sec:application}

In this section we illustrate an application of the proposed methodology to electroencephalography (EEG) data obtained from the PhysioNet repository \citep{GoldbergerEtal2000}. In particular we considered the ``CHB-MIT Scalp EEG Database'' available at  \url{https://physionet.org/pn6/chbmit/}. The database contains EEG traces recorded at the Children’s Hospital Boston on  pediatric subjects with intractable seizures. Subjects were monitored for several days after the  withdrawal of anti-seizure medication before the final decision about the  surgical intervention. 22 subjects were traced during the experiment for several days using the  international 10-20 EEG system. The latter is a standard that specifies  electrode positions and nomenclature. Therefore, for each subject 21 electrodes have been placed in certain positions of the scalp, each of these electrodes produced an electric signal sampled at 256Hz and measured with 16bit precision. This means that each day (24 hours), the EEG machine produced 21 time series each containing  $n = 22,118,400$ data points for a total of 464,486,400 amplitude measurements for each subject in the experiment. A description of the ``\textit{CHB-MIT Scalp EEG Database}'', as well as details about the data acquisition is given in \cite{Shoeb2009}.

\begin{figure}[!t]
  \centering
  \includegraphics[width=\linewidth]{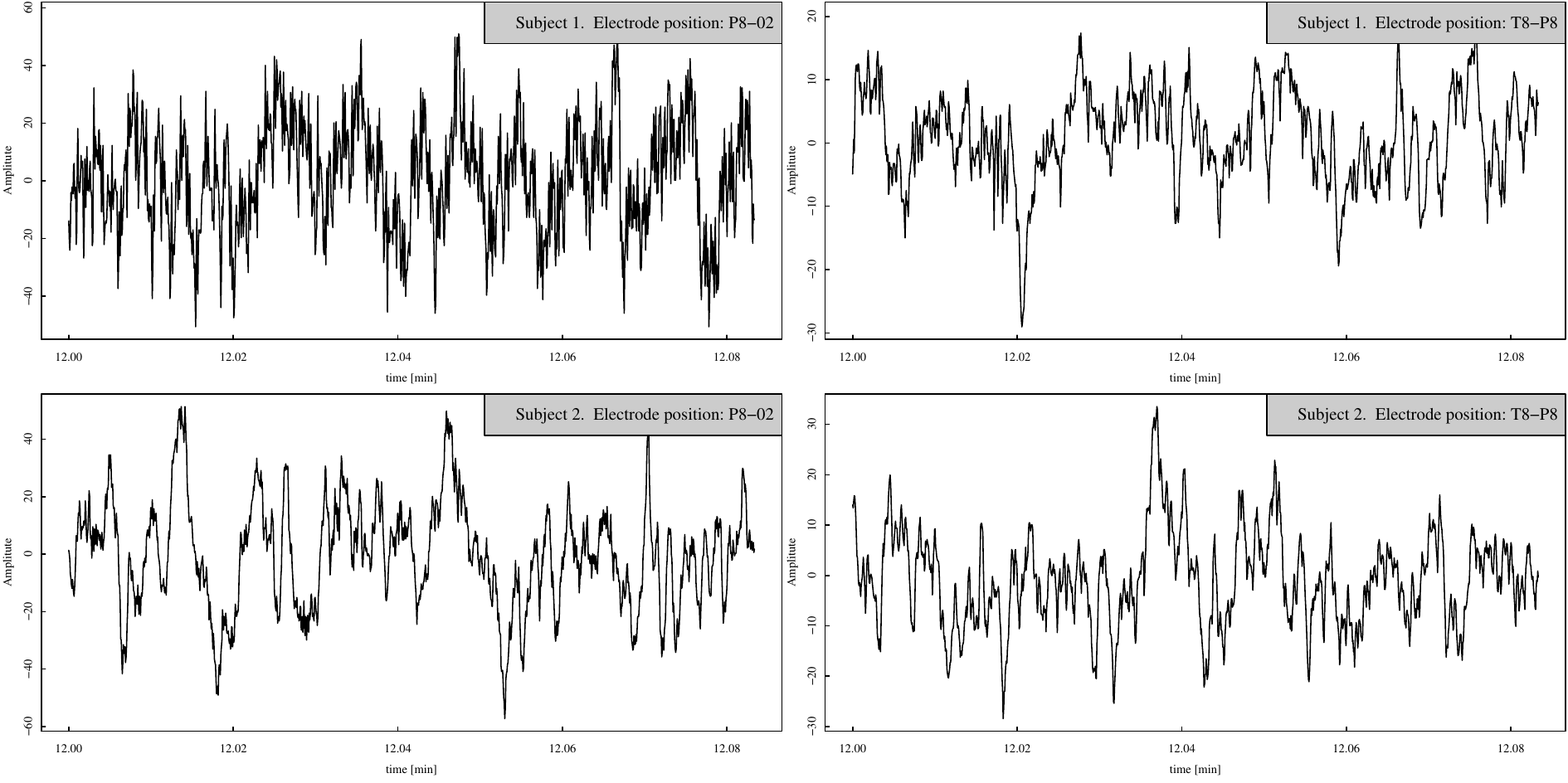}
  \caption{Time series plots of the amplitude of the EEG signals  recorded in positions P8-02 and T8-P8 for two distinct subjects in the experiment. The duration of the fragment is 5sec, and it starts at 12min from the beginning of the recording.}\label{fig:eeg}
\end{figure}

EEG signals have complex structures. Various sources of noise can be injected in the measurement chain, therefore it is always of interest to understand the behavior of the  SNR.  For this application we considered data for the first 3 subjects of the database, and we considered two electrode positions labeled P8-02 and T8-P8 in the  10-20 EEG system. The P8-02 electrode is placed on the parietal lobe responsible for integrating sensory information of various types.  The T8-P8  electrode is placed on the  temporal lobe which  transforms  sensory inputs  into  meanings retained as  visual memory, language comprehension, and emotion association. An example of these traces is given in Figure~\ref{fig:eeg}.

The method proposed here has been applied to obtain confidence intervals for the SNR. An SNR$\geq 10$dB can be considered a requirement for a favorable noise floor in these applications.  In order to assess the robustness of the procedure with respect to the choice of the subsampling window $b$, for each case we considered windows of fixed size $ b = \{3\text{sec},5\text{sec},7\text{sec}\}$ which means $b=\{768, 1280, 1792\}$, %
\REV{%
plus the estimated $b$ with the method proposed by \cite{GoetzeRackauskas2001}. 
The estimation of $b$ is performed on a grid of equispaced points between 2sec and 10sec.} %
The literature about EEG signals doesn't tell us whether the processes involved have a clear time scale, but 5sec is considered approximately the time length needed to identify interesting cerebral activities. For each $b$ the corresponding $b_1$ is set according to  $b_1 = [b^{2/5}]$ as for the numerical experiments. 
\REV{%
In Table~\ref{tab:confidence_interval} lower and upper limits for 90\% and 95\% confidence intervals of the SNR are reported. %
 Overall the results with estimated $b$ are comparable  to those with fixed $b$. %
 The upper limits of these confidence intervals is never smaller than 10dB. The lower limits is negative in all cases, which means that for all cases there is a chance that the power of the stochastic component dominates that of the deterministic component in model  \eqref{eq:Yt}.  While the upper limit of these confidence intervals is rather stable across units for the same $b$ value, larger differences are observed in terms of the lower limit. All this is a clear indication of the asymmetry of the SNR statistic. But this is expected since the two tails of the SNR statistic reflects two distinct mechanisms. In fact, a negative value for the SNR statistic (left tail) corresponds to situations where the dynamic of the observed time series is driven by the error term of equation \eqref{eq:Yt}. On the other hand a positive value of the SNR statistic (right tail) corresponds to situations where the dynamic is driven by the smooth changes induces by $s(\cdot)$. Ceteris paribus, going from 90\%  level to 95\% does not change the results dramatically. Note that in this kind of applications a 3dB difference is not considered a large difference. Regarding data recorded in the P8-02 position the length of the confidence interval, going from 90\% to 95\% changes between 1.28dB to 3.6dB, where the maximum variation is measured for Subject 3 when $b=7$sec.  For the T8-P8 case  the length of the confidence interval, going from 90\% to 95\%, changes between 1.52dB to 3.83dB, and here the  maximum variation is measured for Subject 2 when $b=7$sec. Some pattern is observed across experimental units.  For given confidence level and  $b$, overall Subject 1  reports the shortest confidence intervals.  Subject 2 reports the longest  intervals for records in  position P8-02. Subject 3 reports the longest intervals in position T8-P8. The variations across values of $b$, with all else equal, are not dramatic. The settings with $b=3$, and $b=5$,  produced longer intervals if compared with $b=7$sec and the optimal $b$. The data-driven method of \cite{GoetzeRackauskas2001} produced an estimated $b$ in the range [3sec, 7sec] for P8-02 data, and  [6sec, 8sec] for T8-P8 data. These values are comparable with the rule of thumb that 5sec is a reasonable time scale for the kind of signals involved here. The general conclusion is that in absence of relevant information, the method of \cite{GoetzeRackauskas2001} gives a useful data-driven choice of $b$. 
}%

\begin{table}[!h]
\centering
\caption{%
Lower and upper limit of the confidence interval for the SNR (expressed in dB) of the EEG records in positions P8-02 and T8-P8. The block length $b$ is expressed in seconds.
The notation $b=$opt denotes the case when $b$ is estimated.%
}%
\label{tab:confidence_interval}
\begin{tabular}{ccccccc}
  	\toprule
  	         Confidence Level           & Subject & $b$~[sec] & \multicolumn{2}{c}{P8-02} & \multicolumn{2}{c}{T8-P8} \\
  	\cmidrule(l){4-5} \cmidrule(l){6-7} &         &           & Lower                     & Upper & Lower & Upper     \\
  \midrule
  90\%                                      & 1       & 3         & -0.93                     & 11.18 & -2.85 & 11.92     \\ 
                                            &         & 5         & 1.14                      & 11.26 & -3.41 & 11.92     \\ 
                                            &         & 7         & 0.86                      & 12.25 & -3.04 & 12.76     \\ 
                                            &         & opt       & -2.24                     & 12.52 & -2.56 & 14.70     \\
 \cmidrule(l){2-7} 
                                            & 2       & 3         & -9.04                     & 14.93 & -8.38 & 14.00     \\ 
                                            &         & 5         & -9.67                     & 15.14 & -7.88 & 12.97     \\ 
                                            &         & 7         & -8.29                     & 15.45 & -7.41 & 12.53     \\ 
                                            &         & opt       & -8.91                     & 15.68 & -7.57 & 12.42     \\
 \cmidrule(l){2-7} 
                                            & 3       & 3         & -8.30                     & 13.06 & -7.14 & 16.50     \\ 
                                            &         & 5         & -8.30                     & 13.41 & -5.37 & 16.97     \\ 
                                            &         & 7         & -5.81                     & 13.01 & -5.01 & 15.38     \\ 
                                            &         & opt       & -7.24                     & 13.49 & -6.70 & 15.66     \\
  \midrule
  95\%                                      & 1       & 3         & -1.36                     & 12.19 & -3.81 & 13.19     \\ 
                                            &         & 5         & 0.50                      & 13.00 & -4.30 & 12.93     \\ 
                                            &         & 7         & 0.49                      & 15.45 & -3.63 & 14.81     \\ 
                                            &         & opt       & -3.11                     & 14.36 & -3.48 & 15.30     \\
  \cmidrule(l){2-7}
                                            & 2       & 3         & -9.65                     & 15.60 & -9.54 & 15.02     \\ 
                                            &         & 5         & -10.47                    & 16.46 & -8.98 & 14.02     \\ 
                                            &         & 7         & -8.96                     & 16.17 & -8.96 & 14.81     \\ 
                                            &         & opt       & -10.62                    & 16.81 & -8.16 & 13.49     \\
  \cmidrule(l){2-7}
                                            & 3       & 3         & -9.60                     & 15.04 & -8.09 & 18.22     \\ 
                                            &         & 5         & -10.66                    & 14.44 & -6.80 & 18.68     \\ 
                                            &         & 7         & -7.46                     & 14.96 & -6.58 & 16.07     \\ 
                                            &         & opt       & -9.75                     & 14.23 & -7.94 & 17.05     \\   
  \bottomrule
  \end{tabular}
\end{table}

\section{Conclusions and final remarks}\label{sec:conclusions}
In this paper we developed an estimation method that consistently estimates the distribution of a SNR statistic in the context of time series data with errors belonging to a rich class of stochastic processes.  We restricted the model to the case where the signal is a smooth function of time. The theory developed here can be easily adapted to more general time series additive regression models. The reference model for the observed data, and the theory developed here adapts to many possible applications that will be the object of a distinct paper.   In this work we concentrated on the theoretical guarantees of the proposed method. The estimation is based on a random  subsampling algorithm that can cope with massive sample sizes. Both the smoothing,  and the subsampling techniques at the earth of Algorithm \ref{algo} embodies original innovations compared to the existing literature on the subject. Numerical experiments described in Section  \ref{sec:numeric} showed that the proposed algorithm performs well in finite samples.


\begin{center}
\textbf{\Large Appendix}
\end{center}

In this section we report the proofs of statements and some useful technical lemmas. First, we state a lemma to evaluate the
$\text{MISE}(\hat s;h)$.
\begin{lemma}\label{lemma1} 
Assume \ref{ass:s}, \ref{ass:eps} and \ref{ass:K}. For $t\in I_h=(h,1-h)$
\begin{equation*}
\text{MISE}(\hat s;h)=\frac{h^4R(s'')}{4}d_K+\sigma^2_{\varepsilon}\frac{N_K}{nh}+2\sigma^2_{\varepsilon}N_K\frac{S^*_{\rho}}{\Lambda_nh}+o\left(\frac{1}{\Lambda_nh}+h^4\right),
\end{equation*}
\[
\text{MISE}(\hat s;h)=\text{AMISE}(\hat s;h)+o\left(\frac{1}{\Lambda_nh}+h^4\right)
\]
where $\hat s$ is the kernel estimator in \eqref{eq:ker}, $R(s'')=\int_{I_h}[s''(t)]^2dt$, $d_K=\int u^2\mathcal{K}(u)du$, $N_K=\int \mathcal{K}^2(u)du$, $\sigma^2_{\varepsilon}=\ex[\varepsilon_t^2]$, $\Lambda_n$ is defined in \eqref{Lambda} and
\begin{equation}
\label{Srho} S^*_{\rho}:=
\begin{cases}
\lim_{n\rightarrow\infty}\sum_{j=1}^n\rho(j) & \text{if \ref{ass:eps}-SRD holds}\\
\lim_{n\rightarrow\infty}\frac{1}{\log n} \sum_{j=1}^n\rho(j) & \text{if \ref{ass:eps}-LRD holds with} \quad \gamma_1=1\\
\lim_{n\rightarrow\infty}\frac{1}{n^{1-\gamma_1}} \sum_{j=1}^n\rho(j)&  \text{if \ref{ass:eps}-LRD holds with} \quad 0<\gamma_1<1.
\end{cases}
\end{equation}
\end{lemma}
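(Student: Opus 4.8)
The plan is to work from the pointwise decomposition $\text{MSE}(\hat s(t);h)=\text{Bias}^2(\hat s(t))+\var(\hat s(t))$ and then integrate over the interior interval $I_h=(h,1-h)$, on which the Priestley--Chao estimator \eqref{eq:ker} is free of boundary effects. The two regimes \ref{ass:eps}-SRD and \ref{ass:eps}-LRD enter only through the variance, so the bias can be handled once and for all; the whole difficulty is concentrated in the off-diagonal part of the variance.

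For the bias, since $\ex[\eps_i]=0$ we have $\ex[\hat s(t)]=(nh)^{-1}\sum_{i=1}^n\Kern{(t-i/n)/h}\,s(i/n)$, a kernel-weighted Riemann sum of $s$. First I would replace this sum by the integral $\int\Kern{u}\,s(t-hu)\,du$, bounding the Riemann-sum error uniformly on $I_h$ via the Lipschitz continuity of $\Kern{\cdot}$ from \ref{ass:K} and the boundedness of $s$ from \ref{ass:s}. A second-order Taylor expansion of $s$ about $t$, together with $\int\Kern{u}\,du=1$, the symmetry of $\Kern{\cdot}$ (so $\int u\,\Kern{u}\,du=0$), and $\int u^2\Kern{u}\,du=d_K$, then gives $\text{Bias}(\hat s(t))=\tfrac{h^2}{2}s''(t)\,d_K+o(h^2)$ uniformly on $I_h$. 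Squaring and integrating yields the first term, with $R(s'')=\int_{I_h}[s''(t)]^2\,dt$.

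For the variance I would write $\var(\hat s(t))=(nh)^{-2}\sum_{i,j}\Kern{(t-i/n)/h}\Kern{(t-j/n)/h}\,\gamma(i-j)$ with $\gamma(l)=\sigma_\eps^2\rho(l)$, and split off the diagonal $l=0$ contribution, which reduces (again by a Riemann-sum approximation) to $\sigma_\eps^2 N_K/(nh)$. The off-diagonal part collapses to $(2\sigma_\eps^2/(nh))\sum_{l\ge1}\rho(l)\,w_n(l)$, where $w_n(l)$ is the normalised kernel cross-product weight satisfying $w_n(l)\to N_K$. Under \ref{ass:eps}-SRD, the mixing condition together with $\ex|\eps_i^2|^{2+\delta}<\Inf$ gives absolute summability of the autocovariances through a Davydov-type covariance inequality, so the sum converges to $N_K S^*_\rho$ with $S^*_\rho=\sum_{l\ge1}\rho(l)$ and $\Lambda_n=n$. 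Under \ref{ass:eps}-LRD the autocorrelations are not summable; using the linear structure $\psi_j\sim C_1 j^{-\eta}$ with $\eta=(1+\gamma_1)/2$ one shows $\rho(l)\sim C\,l^{-\gamma_1}$, so approximating $w_n(l)$ by $N_K$ the partial sums $\sum_{l=1}^{n}\rho(l)$ diverge at exactly the rate normalised away in the definitions \eqref{Srho} of $S^*_\rho$ and \eqref{Lambda} of $\Lambda_n$, across the sub-cases $\gamma_1=1$ and $0<\gamma_1<1$. This produces the third term $2\sigma_\eps^2 N_K S^*_\rho/(\Lambda_n h)$, and integrating the asymptotically $t$-free leading variance terms over $I_h$ carries them through unchanged.

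The hard part will be the off-diagonal variance under LRD: one must pin down the exact divergence rate of $\sum_{l}\rho(l)$ and show it matches $\Lambda_n$ in every sub-case, while controlling the weights $w_n(l)$ uniformly over the lags that actually contribute and assembling all error terms into the single remainder $o\!\left(1/(\Lambda_n h)+h^4\right)$. A secondary technical point is that this must be carried out under only a finite fourth-moment assumption on the innovations, without Gaussianity, and with the Riemann-sum errors controlled uniformly in $t\in I_h$ so that the pointwise expansions can legitimately be integrated.
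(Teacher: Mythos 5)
Your bias expansion and the SRD half of the variance argument are sound: under \ref{ass:eps}-SRD, Davydov's covariance inequality combined with the moment bound $\ex\abs{\eps_i^2}^{2+\delta}<\Inf$ and $\sum_k\alpha^{\delta/(2+\delta)}(k)<\Inf$ does give $\sum_l\abs{\rho(l)}<\Inf$, and then replacing the lag weights $w_n(l)$ by $N_K$ costs only $o(1)$ because $\abs{w_n(l)}\le N_K$ and dominated convergence applies. The genuine gap is in the LRD case, exactly at the step you yourself defer to ``the hard part'', and as written that step fails. Since \ref{ass:K} forces $\Kern{\cdot}$ to have compact support, the cross-product weight is $w_n(l)\approx\kappa\bigl(l/(nh)\bigr)$ with $\kappa(v)=\int\Kern{u}\Kern{u-v}\,du$, so that $w_n(l)=0$ for all lags $l\gtrsim 2nh$ and $w_n(l)$ decays from $N_K$ to $0$ as $l$ sweeps $[0,2nh]$. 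Your statement ``$w_n(l)\to N_K$'' is only pointwise in $l$; interchanging this limit with the divergent sum $\sum_{l\le n}\rho(l)$ is precisely what is illegitimate under LRD, where no summable dominating sequence exists. Doing the computation honestly with $\rho(l)\sim Cl^{-\gamma_1}$, the off-diagonal sum is $\sum_{l\le 2nh}\rho(l)\kappa(l/(nh))\asymp (nh)^{1-\gamma_1}$ when $0<\gamma_1<1$, and $\asymp\log(nh)$ when $\gamma_1=1$, whereas the quantity you substitute for it, $N_K\sum_{l\le n}\rho(l)$, is $\asymp n^{1-\gamma_1}$ (resp.\ $\asymp\log n$). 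With $h\asymp\Lambda_n^{-1/5}$ these differ by the factor $h^{1-\gamma_1}\to 0$ (resp.\ by $\log(nh)/\log n\to 4/5$), so the error committed by ``approximating $w_n(l)$ by $N_K$'' is of the same order as, or larger than, the very term $2\sigma^2_{\eps}N_K S^*_{\rho}/(\Lambda_n h)$ you are trying to produce. In other words, a truly direct evaluation of the off-diagonal variance yields a contribution of order $(nh)^{-\gamma_1}$ (the familiar Hall--Hart-type long-memory rate), and your sketch does not bridge the distance between that and the expansion stated in the lemma; the bridge cannot consist of the pointwise convergence of the weights.

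This is also where your route departs from the paper's proof, which never evaluates the off-diagonal sum explicitly. The paper rescales the autocorrelations to $\rho_n(j)$ (equal to $\rho(j)$, $\rho(j)/\log n$, or $\rho(j)/n^{1-\gamma_1}$ in the three cases), checks which of conditions A--E of \citet{altman-1990} the rescaled sequence satisfies, and then invokes Altman's Theorem 1 (for SRD and for LRD with $\gamma_1=1$) or her Lemma A.4 together with $\sum_{j\le n}j\rho_n(j)=O(n)$ (for $0<\gamma_1<1$) to read off the variance expansion; all the lag-weight control is thereby delegated to Altman's machinery. If you want a self-contained proof you must prove, not assume, the analogue of that machinery: a bound showing $\sum_l\rho(l)\bigl(w_n(l)-N_K\uno\set{l\le n}\bigr)=o(n/\Lambda_n)$ under \ref{ass:eps}-LRD. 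The calculation sketched above is exactly where such a bound has to be fought for, and it is the one step your proposal waves through.
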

\begin{proof}
By  \ref{ass:K} it  follows that conditions A-C of \citet{altman-1990} are satisfied. Now, let
\begin{equation*}
\rho_n(j):=
\begin{cases}
\rho(j)                       & \text{if \ref{ass:eps}-SRD holds}\\ 
\frac{1}{\log n}\rho(j)       & \text{if \ref{ass:eps}-LRD holds with} \quad \gamma_1=1\\
\frac{1}{n^{1-\gamma_1}}\rho(j) & \text{if \ref{ass:eps}-LRD holds with} \quad 0<\gamma_1<1.
\end{cases}
\end{equation*}
For the cases SRD and LRD with $\gamma_1=1$ the conditions D and E of \citet{altman-1990} are still satisfied with $\rho_n(j)$. Following the same arguments as in the proof of Theorem 1 of \citet{altman-1990} the result follows. Finally, in the last case, $\rho_n(j)$ satisfies condition D but not condition E of \citet{altman-1990}. So, we have
\begin{equation*}
\sum_{j=1}^nj\rho_n(j)=O(n).
\end{equation*}
Therefore, using Lemma A.4 in \citet{altman-1990}, it follows that
\begin{equation*}
\var[\hat s]=\sigma^2_{\varepsilon}\frac{N_K}{nh} %
+ 2\sigma^2_{\varepsilon}N_K\frac{S^*_{\rho}}{\Lambda_nh} %
+ o\left(\frac{1}{\Lambda_nh}\right).
\end{equation*}
The latter completes the proof.
\end{proof}
The $\text{AMISE}(\hat s;h)$ is the asymptotic MISE, the main part of the MISE.

Note that Lemma \ref{lemma1} gives a similar formula to (2.8) in Theorem 2.1 of \citet{Hall-a-etal-1995}. However, differently  from \citet{Hall-a-etal-1995} our approach does not need to introduce an additional  parameter to capture SRD and LRD. Also notice that taking $h\in H$ as in \ref{ass:K}, implies that  $\text{MISE}(\hat s;h)=O\left(\Lambda_n^{-4/5}\right)$, which means that the kernel estimator achieves  the global optimal rate.

\paragraph{Proof of Theorem~\ref{th:kern_opt}.}  Lemma \ref{lemma1} holds under \ref{ass:s}, \ref{ass:eps} and \ref{ass:K}. Let %
$\hat{\gamma}(j)=\frac{1}{n}\sum_{t=1}^{n-j}\hat{\varepsilon}_t\hat{\varepsilon}_{t+j}$ %
be the estimator of the autocovariance $\gamma(j)$ with $j=0,1,\ldots$. By \ref{ass:K} $r_n=\frac{1}{\Lambda_nh}+h^4=\Lambda_n^{-4/5}$, and  by Markov inequality 
\begin{equation}\label{eq:proof_1_1}
P\left(\frac{1}{n}\sum_{i=1}^{n}\left(s(i/n)-\hat{s}(i/n)\right)^2>\eta\right)\le \frac{1}{\eta}\frac{1}{n}\sum_{i=1}^n E\left((\hat s(i/n) - s(i/n))^2\right)=O\left(\text{MISE}(\hat{s};h)\right),
\end{equation}
for some $\eta>0$ and when $n\rightarrow\infty$.
\\
It means that  $\frac{1}{n}\sum_{i=1}^{n}\left(s(i/n)-\hat{s}(i/n)\right)^2=\text{AMISE}(\hat{s};h)+o_p(r_n)$.
Rewrite  $\hat{\gamma}(j)$ as
\begin{align}
\hat{\gamma}(j) 
=&  \frac{1}{n}\sum_{i=1}^{n-j}\left(s(i/n)-\hat{s}(i/n)\right)\left(s((i+j)/n)-\hat{s}((i+j)/n)\right) + \nonumber \\  %
+&  \frac{1}{n}\sum_{i=1}^{n-j}\left(s((i+j)/n)-\hat{s}((i+j)/n)\right)\varepsilon_{i} +  \nonumber\\ %
+&  \frac{1}{n}\sum_{i=1}^{n-j}\left(s(i/n)-\hat{s}(i/n)\right)\varepsilon_{i+j}+\frac{1}{n}\sum_{i=1}^{n-j}\varepsilon_i\varepsilon_{i+j}=\text{I+II+III+IV}.
\label{eq_autocorr_decomp}
\end{align}
By \eqref{eq:proof_1_1}  and Cauchy-Schwartz inequality it results that term I$=O_p(r_n)$ in $\hat{\gamma}(j)$. Consider term III in \eqref{eq_autocorr_decomp}. 
\REV{Without loss of generality assume t $s(t)\not = 0$. By Chebyshev inequality
\[
P\left(\left|\hat s (t)-s(t)\right|>\eta\right)\le \frac{MSE(\hat s;h)}{\eta ^2},
\] 
for some $\eta>0$. By using the same arguments as in the proof of Lemma \ref{lemma1}, it follows that $MSE(\hat s;h)=O\left(r_n\right)$ so that   $\hat s(t)=s(t)(1+O_p(r_n^{1/2}))$.} Therefore, it is sufficient to investigate the behaviour of  
\begin{equation*}
\frac{1}{n}\sum_{i=1}^{n-j}s(i/n)\varepsilon_{i+j}.
\end{equation*}
$\sum_j^n\rho(j)=O(\log n)$ under LRD with $\gamma_1=1$, and   $\sum_j^n\rho(j)=O(n^{1-\gamma_1})$ under LRD with $0<\gamma_1<1$. By \ref{ass:s}, and  applying Chebyshev inequality, it happens that III$=O_p(\Lambda_n^{-1/2})$. Based on similar arguments one has that term II$=O_p(\Lambda_n^{-1/2})$. Now consider last term of \eqref{eq_autocorr_decomp}, and notice that it is the series of products of autocovariances. Theorem 3 in \citet{Hosking-1996} is used to conclude that the series is convergent under SRD and LRD with $1/2<\gamma_1\le 1$,  while it is divergent under LRD with $0<\gamma_1\le 1/2$. Based on this, direct application of Chebishev inequality to term IV implies that IV$=o_p(\Lambda_n^{-1/2}).$ Then $\hat{\gamma}(j)=\gamma(j) + O_p(r_n) +O_p(\Lambda_n^{-1/2}) +O_p(j/n)$, where the $O_p(j/n)$ is due to the bias of $\hat{\gamma}(j)$. This means that $\hat{\rho}(j) = \rho(j) + O_p(r_n)+ O_p(\Lambda_n^{-1/2})+O_p(j/n)$. Since $\mathcal{K}(\cdot)$ is bounded then one can write
\begin{equation*}
\frac{1}{nh}\sum_{j=-M}^M
\mathcal{K}\left(\frac{j}{nh}\right)\hat{\rho}(j)=\frac{1}{nh}\sum_{j=-M}^M
\mathcal{K}\left(\frac{j}{nh}\right)\rho(j)+\frac{M}{nh}O_p(r_n)+\frac{M}{nh}O_p(\Lambda_n^{-1/2})+\frac{M^2}{n}O_p\left(\frac{1}{nh}\right).
\end{equation*}
Using \ref{ass:M} and $h=O(\Lambda_n^{-1/5})$,  \ref{ass:K} implies that 
\begin{equation}\label{eq_spectral_consistency}
\frac{1}{nh}\sum_{j=-M}^M
\mathcal{K}\left(\frac{j}{nh}\right)\hat{\rho}(j)=\frac{1}{nh}\sum_{j=-M}^M
\mathcal{K}\left(\frac{j}{nh}\right)\rho(j)+o_p(r_n).
\end{equation}
Consider
\begin{equation*}
Q_1=\left|\frac{1}{nh}\sum_{j=-[nh/2]}^{[nh/2]}\mathcal{K}\left(\frac{j}{nh}\right)\rho(j)-\frac{1}{nh}\sum_{j=-M}^M \mathcal{K}\left(\frac{j}{nh}\right)\hat{\rho}(j)\right|,
\end{equation*}
and by \eqref{eq_spectral_consistency} it follows that
\begin{equation*}
Q_1=\left|\frac{2}{nh}\sum_{j=M+1}^{[nh/2]}\mathcal{K}\left(\frac{j}{nh}\right)\rho(j)\right|+o_p(r_n).
\end{equation*}
By \ref{ass:eps}, \ref{ass:K} and \ref{ass:M},
\begin{equation*}
\frac{1}{nh}\sum_{j=M+1}^{[nh/2]}\mathcal{K}\left(\frac{j}{nh}\right)\rho(j)\sim\rho(nh-M)=o(r_n),
\end{equation*}
which implies that $Q_1=o_p(r_n)$.  
It means that the CV function, as defined in (22) of \citet{altman-1990} with the estimated  correlation function, has an error rate of $o_p(r_n)$ with respect to 
\[
\left[1-\frac{1}{nh}\sum_{j=-M}^M\mathcal{K}\left(\frac{j}{nh}\right)\rho(j)\right]
^{-2}  \frac{1}{n} \sum_{i=1}^n \hat{\eps}_i^2.
\]
Now, we can apply the classical bias correction and based on (14) in \citet{altman-1990}, we
have that
\[
\text{CV}(h)=\frac{1}{n}\sum_{i=1}^n \eps_i^2  +\text{MISE}(\hat{s};h)+
o_p(r_n)=\sigma_{\varepsilon}^2+\text{AMISE}(\hat{s};h)+
o_p(r_n)
\]
Since $\text{AMISE}(\hat s;h)=O(r_n)$, it follows that $\hat{h}$, the
minimizer of $\text{CV}(h)$, is equal to $h^\star$, the minimizer
of $\text{MISE}(\hat{s};h)$, asymptotically in probability. By Lemma \ref{lemma1}, it follows that $h^\star$ is the same minimizer with respect to $\text{AMISE}(\hat s;h)$ asymptotically.
\qed\\

The subsequent Lemmas are needed to show Theorem~\ref{th:conv_var_dist} and Corollary \ref{cor:2}.
\begin{lemma}\label{lemma2} 
Assume  \ref{ass:eps}. Suppose that $\{a_t\}$, in \ref{ass:eps}, is Normally distributed when $0<\gamma_1\le 1/2$. Then $n\to\infty$, $b\to\infty$ and $b/n \to 0$ implies $\sup_x\left|G_{n,b}(x)-G(x)\right| \convp 0,$ and 
$q_{n,b}(\gamma_2) \convp q(\gamma_2)$ for all $\gamma_2 \in (0,1).$
\end{lemma}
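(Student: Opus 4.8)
The plan is to read this as a subsampling consistency statement in the spirit of \cite{politis-etal-1999}, where the root $\tau_n(V_n-\sigma_{\varepsilon}^2)$ has limit law $G$ by Remark~\ref{rem:distribuzione}, and $G_{n,b}$ is the empirical distribution of the block roots $\tau_b(V_{n,b,t}-V_n)$. The first move is to discard the centering at $V_n$. Writing
\[
\tau_b(V_{n,b,t}-V_n)=\tau_b(V_{n,b,t}-\sigma_{\varepsilon}^2)-\frac{\tau_b}{\tau_n}\,\tau_n(V_n-\sigma_{\varepsilon}^2),
\]
and checking case by case in \eqref{tau} that $b\to\infty$ and $b/n\to0$ force $\tau_b/\tau_n\to0$, the centering term $c_n=\tau_b(V_n-\sigma_{\varepsilon}^2)=(\tau_b/\tau_n)\,\tau_n(V_n-\sigma_{\varepsilon}^2)$ is $o_p(1)$, since $\tau_n(V_n-\sigma_{\varepsilon}^2)=O_p(1)$. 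Because $G_{n,b}(x)=U_{n,b}(x+c_n)$ with $U_{n,b}(x)=(n-b+1)^{-1}\sum_{t}\uno\set{\tau_b(V_{n,b,t}-\sigma_{\varepsilon}^2)\le x}$, and $G$ is continuous, this shift is asymptotically negligible, so it suffices to prove $\sup_x\abs{U_{n,b}(x)-G(x)}\convp0$.

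For $U_{n,b}$ I would argue through its first two moments. Strict stationarity of $\ProcEps$ makes $\ex[U_{n,b}(x)]=P(\tau_b(V_{n,b,1}-\sigma_{\varepsilon}^2)\le x)$, which converges to $G(x)$ at every continuity point by Remark~\ref{rem:distribuzione} read at sample size $b$. The core is then $\var(U_{n,b}(x))\to0$, i.e. controlling $(n-b+1)^{-2}\sum_{s,t}\cov(\uno_s,\uno_t)$, where $\uno_t$ abbreviates the indicator attached to block $t$. Overlapping pairs with $\abs{s-t}<b$ number $O(nb)$ and contribute $O(b/n)\to0$; the substance lies in the separated pairs $\abs{s-t}\ge b$.

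Under \ref{ass:eps}-SRD this is routine: for $\abs{s-t}\ge b$ the two blocks are disjoint, so the $\alpha$-mixing covariance inequality bounds $\abs{\cov(\uno_s,\uno_t)}$ by a multiple of $\alpha(\abs{s-t}-b+1)$, and since $\sum_k\alpha^{\delta/(2+\delta)}(k)<\infty$ implies $\sum_k\alpha(k)<\infty$, the separated contribution is $O(n^{-1}\sum_k\alpha(k))\to0$. The hard part is \ref{ass:eps}-LRD, where the sequence is not strongly mixing and this bound is unavailable; here I would follow the strategy of \citet{Hall-b-etal-1998} and \citet{Jach-etal-2012}, exploiting the linear moving-average representation of $\eps_i$ (and the Gaussianity of $\{a_t\}$ imposed when $0<\gamma_1\le1/2$) to bound the covariances of the block-statistic indicators directly and show they are negligible in the averaged $L^2$ sense. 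This covariance control under long memory is the main obstacle, since one cannot appeal to summable mixing coefficients and must instead quantify the weak dependence of well-separated quadratic block functionals from the $\psi_j\sim C_1 j^{-\eta}$ decay.

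Finally, mean convergence together with $\var(U_{n,b}(x))\to0$ yield $U_{n,b}(x)\convp G(x)$ at every continuity point of $G$. Since each admissible limit law (Normal, or the non-central limit law arising for $0<\gamma_1<1/2$) is continuous, Polya's theorem upgrades pointwise convergence to $\sup_x\abs{G_{n,b}(x)-G(x)}\convp0$. Consistency of the quantiles, $q_{n,b}(\gamma_2)\convp q(\gamma_2)$, then follows by the standard inversion argument, using continuity and strict monotonicity of $G$ at $q(\gamma_2)$.
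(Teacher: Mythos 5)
Your overall architecture matches the paper's: you recenter from $V_n$ to $\sigma_{\varepsilon}^2$ (the paper does this implicitly by passing to $G_{n,b}^0$; your explicit check that $\tau_b/\tau_n\to 0$ case by case in \eqref{tau} is a nice touch), you then run a mean/variance argument on the uncentered empirical distribution, handle SRD by the $\alpha$-mixing covariance inequality (the paper instead cites Theorems 4.1 and 5.1 of \citet{politis-etal-1999} wholesale, which amounts to the same thing), and finish with Polya's theorem and quantile inversion exactly as the paper does. Up to that point the proposal is correct.

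The genuine gap is in the LRD case, which is the entire substance of this lemma and which you leave as a declared intention ("bound the covariances of the block-statistic indicators directly") rather than an argument. The missing idea is how to pass from decay of covariances between well-separated \emph{block statistics} to decay of covariances between the \emph{indicator events} defining the empirical distribution: vanishing covariance of two random variables does not by itself control $P(A\cap B)-P(A)P(B)$ for the associated events. The paper's proof supplies exactly this chain: (i) reduce $V_{n,b,t}$ to the uncentered form $b^{-1}\sum(\varepsilon_i^2-\sigma_{\varepsilon}^2)$ via Theorem 2 of \citet{Hosking-1996}; (ii) invoke the Hermite-rank-2 structure of the square function and the variance bound of Theorem 2.2 of \citet{Hall-b-etal-1998}, which reduces everything to joint probabilities of two separated blocks; (iii) bound the covariance of separated block statistics by $\frac{\tau_b^2}{b}(N-b+1)^{-2\gamma_1}C_b$ using $\phi_2(k)=O(k^{-2\gamma_1})$ from Theorem 3 of \citet{Hosking-1996}; and, crucially, (iv) convert that covariance decay into asymptotic \emph{independence} of the two indicator events — via asymptotic joint normality when $1/2\le\gamma_1\le 1$, and via Theorem 2.3 of \citet{Hall-b-etal-1998} when $0<\gamma_1<1/2$. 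Step (iv) is precisely where the Gaussianity assumption on $\{a_t\}$ for $0<\gamma_1\le 1/2$ is consumed; you mention that Gaussianity is "exploited" but never say where or why it is needed, and without step (iv) your variance argument does not close. As written, the proposal proves the easy regime and the recentering, and points at the right references for the hard regime, but does not contain the argument that makes the lemma true under long-range dependence.
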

\begin{proof}
Under \ref{ass:eps}-SRD, Theorems 4.1 and 5.1 of \citet{politis-etal-1999} hold and the results follow. The rest of the proof deals with the LRD case. Since $G(x)$ is continuous \citep[see][]{Hosking-1996}, we  follow  proof of Theorem 4 of \citet{Jach-etal-2012}. Fix 
 $G_{n,b}^0(x)=\frac{1}{N}\sum_{i=1}^N \uno\left\{\tau_b\left(V_{n,b,i}-\sigma_{\varepsilon}^2\right)\le x\right\}$ with $N=n-b+1$. It is sufficient to show that $\var[G_{n,b}^0(x)] \to 0$ as $n\to \infty$. Apply Theorem 2 in \citet{Hosking-1996} to conclude that $\tau_n\left(V_n-\sigma_{\varepsilon}^2\right)$ has the same distribution as $\tau_n\left(V_n^{1}\right)$ , where $V_n^{1}=\frac{1}{n}\sum_{i=1}^n(\varepsilon_i^2-\sigma_{\varepsilon}^2)$. Therefore, we have to show that  $\var[G^1_{n,b}(x)] \to 0$ as  $n \to \infty$,  where 
\begin{equation*}
G^1_{n,b}(x) = \frac{1}{N}\sum_{i=1}^N\uno\left\{\tau_bV_{n,b,i}^1\le x)\right\} %
\quad \text{with} \quad %
V_{n,b,i}^1=\frac{1}{b}\sum_{j=1}^b (\varepsilon_{j+i-1}^2-\sigma_{\varepsilon}^2).
\end{equation*}
Using the stationarity of $\ProcEps$, it follows that $\var[G_{n,b}^1(x)] = \ex[(G_{n,b}^1(x)-G_b^1(x))^2],$ where $G_b^1(x)=P\left(\tau_bV_b^1\le x\right)$. By \citet{Hall-b-etal-1998} the Hermite rank of the square function is 2. Then, based on the same arguments as in the proof of Theorem 2.2 of \citet{Hall-b-etal-1998} with $q=2$, we can write
\begin{equation}\label{Var:lemma2}
\var[G_{n,b}^1(x)] \le %
\frac{2b+1}{N}G_b^1(x)+\frac{2}{N}\sum_{i=b+1}^{N-1}\left|P\left(\tau_bV_{n,b,1}^1\le
x,\tau_bV_{n,b,i+1}^1\le x\right)-\left[G_b^1(x)\right]^2\right|.
\end{equation}
Consider
\begin{equation*}
\cov[\tau_bV_{n,b,1}^1,\tau_bV_{n,b,N}^1]=\frac{\tau_b^2}{b^2}\ex\left[\sum_{i=1}^b(\varepsilon_i^2-\sigma_{\varepsilon}^2)\cdot
\sum_{i=1}^b(\varepsilon_{i+N-1}^2-\sigma_{\varepsilon}^2)\right].
\end{equation*}
After some algebra, we obtain
\begin{equation}\label{Cov:lemma2}
\cov[\tau_bV_{n,b,1}^1,\tau_bV_{n,b,N}^1]=\frac{\tau_b^2}{b}\sum_{k=-(b-1)}^{b-1}\left(1-\frac{|k|}{b}\right)\phi_2(k+N),
\end{equation}
where for $k=1,2,\ldots,$ $\phi_2(k)$ are the autocovariances of
$\{\eps_t^2\}_{t_\in\Z}$.  For $k \to \infty$,  \ref{ass:eps}-LRD with $0<\gamma_1\le 1$ implies that  $\phi_2(k)=O(k^{-2\gamma_1})$  by Theorem 3 of \citet{Hosking-1996}. Take  \eqref{Cov:lemma2} and  note that
\begin{equation*}
\left|\cov[\tau_bV_{n,b,1}^1,\tau_bV_{n,b,N}^1]\right|\le
\frac{\tau_b^2}{b}\sum_{k=-(b-1)}^{b-1}|\phi_2(k+N)|\le
\frac{\tau_b^2}{b}\left(\frac{1}{N-b+1}\right)^{2\gamma_1}C_b,
\end{equation*}
where 
\begin{equation*}
C_b:= %
\begin{cases}
O(1) & 1/2<\gamma_1\le 1,\\
O(\log b) & \gamma_1=1/2,\\ 
O(b^{1-2\gamma_1}) & 0<\gamma_1<1/2.
\end{cases}
\end{equation*}
The latter implies that for $n\conv \infty$, \eqref{Cov:lemma2} converges to zero. Therefore, $\tau_bV_{n,b,1}^1$ and $\tau_bV_{n,b,N}^1$ are
asymptotically independent. The latter can be argued based on asymptotic normality when $1/2 \leq \gamma_1 \leq 1$. For the case $0 < \gamma_1 < 1/2$ the asymptotic independence can be obtained by using Theorem 2.3 of \cite{Hall-b-etal-1998}.  Thus, right hand side of \eqref{Var:lemma2} converges to zero as $n\conv \infty$ by Cesaro Theorem. The latter  shows that $\sup_x\left|G_{n,b}(x)-G(x)\right| \convp 0.$ \\
Following the same arguments as in Theorem 5.1 of \citet{politis-etal-1999}, and by using the  first part of this proof
one shows that  $q_{n,b}(\gamma_2)\stackrel{p}{\longrightarrow}q(\gamma_2)$. The latter completes the proof.
\end{proof}


\begin{lemma}\label{prop:2}
Assume \ref{ass:s}, \ref{ass:eps}, \ref{ass:K} and \ref{ass:M}. Suppose that $\{a_t\}$, in \ref{ass:eps}, is Normally distributed when $0<\gamma_1\le 1/2$.  Let $\hat s(t)$ be the estimate of $s(t)$ computed on the entire sample (of length $n$). Then $n \to \Inf$ and $b=o(n^{4/5})$ implies $\sup_x \abs{ \hat{G}_{n,b}(x)-G(x)} \convp 0.$
\end{lemma}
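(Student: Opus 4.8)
The plan is to transfer the consistency of $G_{n,b}$ established in Lemma~\ref{lemma2} to the feasible version $\hat G_{n,b}$, the only change being the replacement of the unobservable $\eps_i$ by $\hat\eps_i=y_i-\hat s(i/n)$. Since Lemma~\ref{lemma2} gives $\sup_x|G_{n,b}(x)-G(x)|\convp 0$ and $G$ is continuous (Remark~\ref{rem:distribuzione}), by the triangle inequality it is enough to prove $\sup_x|\hat G_{n,b}(x)-G_{n,b}(x)|\convp 0$. Because both distribution functions are centred at the same $V_n$, the perturbation is $\delta_t:=\tau_b(\hat V_{n,b,t}-V_{n,b,t})$. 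Applying the elementary bounds $\uno\{a+\delta\le x\}\le\uno\{a\le x+\eps\}+\uno\{|\delta|>\eps\}$ and its mirror image termwise gives, for every $\eps>0$, $G_{n,b}(x-\eps)-R_n\le\hat G_{n,b}(x)\le G_{n,b}(x+\eps)+R_n$, where $R_n:=(n-b+1)^{-1}\sum_t\uno\{|\delta_t|>\eps\}$. Taking the supremum in $x$, invoking the uniform continuity of $G$ together with Lemma~\ref{lemma2}, and then letting $\eps\to0$, the whole statement reduces to showing $R_n\convp0$ for each fixed $\eps>0$.

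First I would dispose of boundary effects. The blocks that meet $(0,h)\cup(1-h,1)$ are an $O(h)\to0$ fraction of the $N=n-b+1$ blocks, so bounding their indicators by $1$ contributes $o(1)$ to $R_n$; on the remaining interior blocks Lemma~\ref{lemma1} provides the interior rate $\mathrm{MSE}(\hat s(t);h)=O(r_n)$ with $r_n=\Lambda_n^{-4/5}$. By Markov's inequality it then suffices to prove $\tau_b\,(n-b+1)^{-1}\sum_t|\hat V_{n,b,t}-V_{n,b,t}|\convp0$, the sum now running over interior blocks. Writing $d_i:=s(i/n)-\hat s(i/n)$ and expanding the centred squares yields $\hat V_{n,b,t}-V_{n,b,t}=A_t+B_t$, with $B_t=b^{-1}\sum_{i=t}^{t+b-1}(d_i-\bar d_{b,t})^2\ge0$ and $A_t=2b^{-1}\sum_{i=t}^{t+b-1}(\eps_i-\bar\eps_{b,t})(d_i-\bar d_{b,t})$.

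The quadratic term $B_t$ is handled directly: each index belongs to at most $b$ blocks, so $(n-b+1)^{-1}\sum_t B_t\le(n-b+1)^{-1}\sum_i d_i^2=O_p(r_n)$ by Lemma~\ref{lemma1}, and a short case analysis confirms that $\tau_b r_n\to0$ in every SRD/LRD regime exactly under the hypothesis $b=o(n^{4/5})$; this is precisely where the constraint on $b$ enters. The cross term $A_t$ is the main obstacle. The crude bound $|A_t|\le2V_{n,b,t}^{1/2}B_t^{1/2}$ only yields $O_p(r_n^{1/2})$, and $\tau_b r_n^{1/2}$ does not vanish under LRD, so the lost square root must be recovered from the mean-zero structure of the $\eps_i$. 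I would split the smoothing error as $d_i=\beta_i-w_i$ into a deterministic bias $\beta_i=O(h^2)$ and a stochastic part $w_i=(nh)^{-1}\sum_j\mathcal{K}((i-j)/(nh))\eps_j$. The bias contribution $b^{-1}\sum(\eps_i-\bar\eps_{b,t})\beta_i$ is mean-zero with variance governed by the blockwise autocovariance sums of the $\eps_i$, while the stochastic contribution is a quadratic form in the $\eps_i$ whose mean is a kernel-weighted autocovariance sum of order $(\Lambda_n h)^{-1}=O(r_n)$ and whose fluctuations are of smaller order. Bookkeeping these rates regime by regime --- using $\sum_{k}\rho(k)=O(1),O(\log n),O(n^{1-\gamma_1})$ respectively, the bound $\phi_2(k)=O(k^{-2\gamma_1})$ for the autocovariances of $\{\eps_t^2\}$ from Theorem~3 of \citet{Hosking-1996}, and the moment conditions in \ref{ass:eps} --- shows that $\tau_b$ times each piece tends to zero under $b=o(n^{4/5})$, which completes the argument.
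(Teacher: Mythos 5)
Your proposal is correct, and at the skeleton level it is the same argument the paper uses: reduce to $\sup_x|\hat G_{n,b}(x)-G_{n,b}(x)|\convp 0$ via Lemma~\ref{lemma2} and the continuity of $G$, control the perturbation $\tau_b(\hat V_{n,b,t}-V_{n,b,t})$ through the smoothing error of Lemma~\ref{lemma1}, and check a rate condition of the form $\tau_b r_n\to 0$ with $r_n=\Lambda_n^{-4/5}$, which is where $b=o(n^{4/5})$ enters. The difference is that the paper compresses all of this into a citation --- it invokes ``the same approach as in Lemma 1, part (i)'' of \cite{CorettoGiordano2016x} and records only the rate check --- whereas you reconstruct the machinery explicitly: the indicator sandwich $G_{n,b}(x-\eps)-R_n\le\hat G_{n,b}(x)\le G_{n,b}(x+\eps)+R_n$, the discarding of the $O(h)$ fraction of boundary blocks (a detail the paper never mentions, even though Lemma~\ref{lemma1} is stated only for $t\in I_h$), and the decomposition $\hat V_{n,b,t}-V_{n,b,t}=A_t+B_t$. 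More importantly, your reconstruction surfaces a point that the paper's citation does not actually cover: the cited lemma treats only SRD, where $\tau_b=b^{1/2}$, $r_n=n^{-4/5}$, and the crude Cauchy--Schwarz bound $|A_t|\le 2V_{n,b,t}^{1/2}B_t^{1/2}$ suffices because $\tau_b r_n^{1/2}=b^{1/2}n^{-2/5}\to 0$ precisely when $b=o(n^{4/5})$. Under every LRD regime ($0<\gamma_1\le 1$) the hypothesis $b=o(n^{4/5})$ does \emph{not} imply $\tau_b r_n^{1/2}\to 0$, so the cross term genuinely requires the finer treatment you propose --- splitting $d_i$ into deterministic bias and the linear form $w_i$, and exploiting the mean-zero/quadratic-form structure (with Gaussianity of $a_t$ when $\gamma_1\le 1/2$, and linearity plus $\ex[a_t^4]<\infty$ otherwise). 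Your regime-by-regime rates do check out: the quadratic term and the mean of the stochastic cross term are $O_p(r_n)$ with $\tau_b r_n\to 0$ in all cases, and the fluctuation terms are of smaller order. So your route is a self-contained and, in the LRD cases, more rigorous version of what the paper outsources; its only weakness is that the final fluctuation bookkeeping for the quadratic form is left as a sketch rather than written out.
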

\begin{proof}
Denote  $r_n=\frac{1}{\Lambda_nh}+h^4$. By Lemma \ref{lemma1} and \ref{ass:K},  $r_n=\Lambda_n^{-\frac{4}{5}}$. $\hat s(t)$ is computed on the whole time
series. By Lemma \ref{lemma2}, we can use the same approach as in Lemma 1, part (\textit{i}) of \cite{CorettoGiordano2016x}. We have only to verify that $\tau_b r_n \to  0$ as $n \to \infty$ which is always true if $b=o(n^{4/5})$\end{proof}

\begin{lemma}\label{cor:1}
Assume \ref{ass:s}, \ref{ass:eps}, \ref{ass:K} and \ref{ass:M}. Suppose that $\{a_t\}$, in \ref{ass:eps}, is Normally distributed when $0<\gamma_1\le 1/2$.  Let $\hat s(t)$ be the estimate of $s(t)$ computed on the entire sample (of length $n$). Then $n \to \infty$ and $b=o(n^{4/5})$ implies $\hat{q}_{n,b}(\gamma_2) \convp q(\gamma_2)$ for any $\gamma_2 \in (0,1)$.
\end{lemma}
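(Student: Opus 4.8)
The plan is to obtain the convergence of the quantiles as a direct consequence of the uniform convergence of the distribution functions established in Lemma~\ref{prop:2}, exactly as the convergence $q_{n,b}(\gamma_2)\convp q(\gamma_2)$ was deduced from $\sup_x|G_{n,b}(x)-G(x)|\convp 0$ at the end of the proof of Lemma~\ref{lemma2}. Concretely, I would invoke the inversion argument of Theorem~5.1 of \citet{politis-etal-1999} with $G_{n,b}$ replaced by $\hat G_{n,b}$; the only thing to verify is that the hypotheses of that result are met by $\hat G_{n,b}$ and $G$.

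Two ingredients are needed. First, under \ref{ass:s}--\ref{ass:M} and the Gaussianity of $\{a_t\}$ when $0<\gamma_1\le 1/2$, Lemma~\ref{prop:2} already yields $\sup_x|\hat G_{n,b}(x)-G(x)|\convp 0$ whenever $b=o(n^{4/5})$, which is the standing assumption here. Second, the limit law $G$ must be continuous and locally strictly increasing at $q(\gamma_2)$, so that $q(\gamma_2)=\inf\set{x:G(x)\ge \gamma_2}$ is the unique solution of $G(x)=\gamma_2$ and $G(q(\gamma_2)-\eta)<\gamma_2<G(q(\gamma_2)+\eta)$ for every $\eta>0$. Continuity of $G$ is the same property already used in Lemma~\ref{lemma2} and due to \citet{Hosking-1996}: $G$ is Normal in the SRD and LRD-with-$1/2<\gamma_1\le 1$ cases, and a continuous non-Gaussian law with strictly positive density in the $0<\gamma_1<1/2$ regime under the assumed Gaussianity of $\{a_t\}$.

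Given these two facts the inversion step is routine. Fixing $\eta>0$ and setting $\delta:=\min\set{\gamma_2-G(q(\gamma_2)-\eta),\,G(q(\gamma_2)+\eta)-\gamma_2}>0$, on the event $\set{\sup_x|\hat G_{n,b}(x)-G(x)|<\delta}$, whose probability tends to $1$ by Lemma~\ref{prop:2}, the monotonicity of $\hat G_{n,b}$ forces $\hat G_{n,b}(q(\gamma_2)-\eta)<\gamma_2$ and $\hat G_{n,b}(q(\gamma_2)+\eta)>\gamma_2$, hence $q(\gamma_2)-\eta\le \hat q_{n,b}(\gamma_2)\le q(\gamma_2)+\eta$. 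This gives $|\hat q_{n,b}(\gamma_2)-q(\gamma_2)|\le \eta$ with probability tending to $1$, that is $\hat q_{n,b}(\gamma_2)\convp q(\gamma_2)$.

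The only genuinely delicate point is the local strict monotonicity (positive density) of $G$ at $q(\gamma_2)$, which is what guarantees the strict gap $\delta>0$ and hence separates the quantile from nearby values; mere continuity would not suffice if $G$ were flat there. In the Gaussian cases this is immediate; for $0<\gamma_1<1/2$ it rests on the fact, recorded later in the paper and attributed to \citet{Hosking-1996}, that the Rosenblatt-type limit law carries a strictly positive density on the relevant portion of the real line. Once this is granted, the lemma follows from Lemma~\ref{prop:2} with essentially no further computation.
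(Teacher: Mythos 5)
Your proposal is correct and follows essentially the same route as the paper: both deduce the quantile consistency from the distribution-function convergence of Lemma~\ref{prop:2} together with continuity of $G$, via the standard inversion argument of Theorem~5.1 of \citet{politis-etal-1999}. The only cosmetic difference is that the paper passes through $G_{n,b}$ (showing $\hat{G}_{n,b}(x)-G_{n,b}(x)=o_p(1)$ and invoking $q_{n,b}(\gamma_2)\convp q(\gamma_2)$ from Lemma~\ref{lemma2}), while you invert $\sup_x|\hat{G}_{n,b}(x)-G(x)|\convp 0$ directly and make explicit the local strict-increase requirement on $G$ that the paper leaves implicit in its citations.
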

\begin{proof}
Using the same arguments as in Lemma \ref{prop:2} we have that $\hat{G}_{n,b}(x)-G_{n,b}(x) = o_p(1)$ for each point $x$. By the continuity of  $G(x)$ at all $x$ we have  that  $q_{n,b}(\gamma_2) \convp q(\gamma_2)$ by Lemma \ref{lemma2}. Therefore  $\hat{q}_{n,b}(\gamma_2) \convp q(\gamma_2)$. 
\end{proof}
Note that assumption $b=o(n^{4/5})$ is needed to deal with  \ref{ass:eps}-LRD, however for \ref{ass:eps}-SRD only  we would only need $b=o(n)$.

\paragraph{Proof of Theorem~\ref{th:conv_var_dist}.} Let $P^*(X)$ and $\ex^*(X)$ be the conditional probability and the conditional expectation of a random variable $X$ with respect to a set $\chi = \set{Y_1,\ldots,Y_n}$. Let $\hat G_{n,b_1}^b(x)$ be the same as $\hat G_{n,b}(x)$,  but now $\hat
s(t)$ is estimated on each subsample of length $b$, and the variance of the error term is computed on the same subsample of length $b_1 < b$.  Without loss of generality, we consider the first observaiton with $t=1$ as in Algorithm \ref{algo}. Then,
\begin{equation*}
\frac{1}{b}\sum_{i=1}^b\left(\hat \varepsilon_i-\varepsilon_i\right)^2\convp\text{MISE}(\hat s;h)=O_p\left(\Lambda_b^{-4/5}\right),
\end{equation*}
using Lemma \ref{lemma1} as in the proof of Lemma \ref{prop:2}. Let $b_1=o(b^{4/5})$. 

\REV{
Let $Z_i(x)=\uno\set{\tau_{b_1}\left(\hat V_{n,b_1,i}-V_n\right)\le x}$
and $Z_i^*(x)=\uno\set{\tau_{b_1}\left(\hat V_{n,b_1,I_i}-V_n\right)\le
x}$. $I_i$ is a Uniform random variable on  $I=\set{1,2,\ldots,n-b+1}$.
$P(Z_i^*(x)=Z_i(x)|\chi)=\frac{1}{n-b+1}$ $\forall i$ at  each
$x$. Write 
$\tilde{G}_{n,b_1}(x)=\frac{1}{K}\sum_{i=1}^KZ_i^*(x)$, it follows that 
\begin{displaymath}
E^* \tonde{ \tilde{G}_{n,b_1}(x)} = \frac{1}{n-b+1}
\sum_{i=1}^{n-b+1}Z_i(x) = \hat{G}_{n,b_1}^b(x)\convp G(x),
\end{displaymath}
as $n\rightarrow\infty$, the latter is implied by by Lemma \ref{prop:2}, and the fact that $\tau_{b_1}\Lambda_b^{-4/5}\conv 0$ when $0<\gamma_1\le 1$ in assumption \ref{ass:eps}.   \\%
Since $\{I_i\}$  is the set of uniform random variables sampled without replecement, we can apply Corollary 4.1 of  \cite{Romano1989}. 
Therefore it follows that $\tilde G_{n,b_1}(x)-\hat G_{n,b_1}^b(x)\convp 0$ as $K\rightarrow\infty$ and $n\rightarrow\infty$. Applying the  delta method approach
\[
\tilde G_{n,b_1}(x)-G(x)=\left(\tilde G_{n,b_1}(x)-\hat G_{n.b_1}^b(x)\right)+\left(\hat G_{n.b_1}^b(x)-G(x)\right)\convp 0,
\]
as $K\rightarrow\infty$, $n\rightarrow\infty$ and $\forall x$. Since $G(x)$ is continuous, the convergence is uniform because of  the argument of the last part of the proof of Theorem 2.2.1 in \cite{politis-etal-1999b}.
} 
This concludes  the proof.
\qed

\paragraph{Proof of Corollary \ref{cor:2}.}
The results follow from the proof of Lemma \ref{cor:1} by replacing Lemma \ref{prop:2}  with  Theorem~\ref{th:conv_var_dist}. \qed

\paragraph{Proof of Theorem~\ref{th:convergence_dist_Q}.}
By (\ref{eq:Qn}) we have that
\[
\widehat{SNR} = 10 \log_{10} \left( \frac{\frac{1}{n}\sum_{i=1}^{n} \hat s^2(t_i)}{\frac{1}{m}\sum_{i=1}^m\left(\hat{\eps}_i-\bar{\hat{\eps}} \right)^2} \right)
\]
and $SNR=10\log_{10}\left({\sigma_{\varepsilon}^{-2}{\int s^2(t)dt}}\right)$. First, we analyze the quantity $\tau_m(\widehat{SNR}-SNR)$. So we can write
\[
\tau_m(\widehat{SNR}-SNR)=10\tau_m\left[\log_{10}\left(\frac{\frac{1}{n}\sum_{i=1}^{n} \hat s^2(t_i)}{\int s^2(t) dt}\right)-\log_{10}\left(\frac{\hat V_m}{\sigma_{\varepsilon}^2}\right)\right]=
\]
\[
=-10\tau_m\left[\log_{10}\left(1+\frac{\hat V_m-\sigma_{\varepsilon}^2}{\sigma_{\varepsilon}^2}\right)-\log_{10}\left(1+\frac{\frac{1}{n}\sum_{i=1}^{n} \hat s^2(t_i)-\int s^2(t) dt}{\int s^2(t) dt}\right)\right]=-10\tau_m(I-II).
\]
\REV{
Using the same arguments as in the proof of Theorem~\ref{th:kern_opt}, it follows that $\hat V_m-\sigma_{\varepsilon}^2=O_p(\tau_m^{-1})$.} Expanding $\log_{10}(1+x)$ in Taylor's series, we have that
\[
\tau_mI=\frac{\tau_m}{\sigma_{\varepsilon}^2}\left(\hat V_m-\sigma_{\varepsilon}^2\right)+o_p(1)
\]
and 
\begin{equation}
\label{eq:IIm}
\tau_mII=\frac{\tau_m}{\int s^2(t) dt}\left(\frac{1}{n}\sum_{i=1}^{n} \hat s^2(t_i)-\int s^2(t) dt+o_p(\Lambda_n^{-2/5})\right)=o_p(1).
\end{equation}
\REV{Now, we show the last result. From the proof of Theorem \ref{th:kern_opt} and by assumption \ref{ass:K}, we have that $\hat s(t)=s(t)\left(1+O_p(\Lambda_n^{-2/5})\right)$. Therefore, 
\begin{equation}
\label{eq:s2}
\hat s^2(t)-s^2(t)=\left[\hat s(t)-s(t)\right]\left[\hat s(t)+s(t)\right]=O_p(\Lambda_n^{-2/5}).
\end{equation}
Now, we can write
\[
\frac{1}{n}\sum_{i=1}^{n} \hat s^2(t_i)-\int s^2(t) dt=\left(\frac{1}{n}\sum_{i=1}^{n} \hat s^2(t_i)-\frac{1}{n}\sum_{i=1}^{n} s^2(t_i)\right)+\left(\frac{1}{n}\sum_{i=1}^{n} s^2(t_i)-\int s^2(t)dt\right)=I_s+II_s.
\]
By using the convergence of the quadrature of a bounded and continuous function to its integral, it follows that $II_s=O(n^{-1})$. By (\ref{eq:s2}), we have that 
\[
I_s=\frac{1}{n}\sum_{i=1}^{n} \left(\hat s^2(t_i)-s^2(t_i)\right)=O_p(\Lambda_n^{-2/5}).
\]
Since $m=o(n^{2/5})$, it follows that $\tau_m\Lambda_n^{-2/5}\rightarrow 0$ as $n\rightarrow\infty$. So, (\ref{eq:IIm}) is shown.
}

Hence, we can conclude that $\tau_m(\widehat{SNR}-SNR)$ has the same asymptotic distribution as $\frac{\tau_m}{\sigma_{\varepsilon}^2}\left(\hat V_m-\sigma_{\varepsilon}^2\right)$ by the Slutsky's Theorem. Therefore, assumption 3.2.1 of \cite{politis-etal-1999b} is verified by Theorem \ref{th:conv_var_dist}. 

\REV{Consider the SNR  evaluated at a given point, namely  $SNR_i=10\log_{10}\left(\frac{s^2(t_i)}{\sigma_{\varepsilon}^2}\right)$.}, and  write $\tau_{b_1}\left(\widehat{SNR}_{n,b,I_i}-\widehat{SNR}\right)$ in $\mathbb{Q}_n(x)$ as 
\REV{
\[
\tau_{b_1}\left(\widehat{SNR}_{n,b,I_i}-\widehat{SNR}\right)=\tau_{b_1}\left(\widehat{SNR}_{n,b,I_i}-SNR_{I_i}\right)-\tau_{b_1}\left(\widehat{SNR}-SNR\right)+
\]
\[
+\tau_{b_1}\left(SNR_{I_i}-SNR\right)=S_1-S_2+S_3,
\]
}
for a given subsample starting at $I_i$. By using the first part of this proof, it follows that $S_2=O_p(\tau_{b_1}/\tau_{m})=o_p(1)$ since $b_1/m\rightarrow 0$ when $n\rightarrow\infty$. %
\REV{%
Now, in order to deal with the quantity $S_1$, we need to show that
\begin{equation}
\label{eq:sb}
\frac{1}{b}\sum_{j=i}^{i+b-1}\left[s\left(\frac{j-i+1}{b}\right)\right]^2\rightarrow s^2(t_i)\qquad \mbox{as }\quad n\rightarrow\infty,
\end{equation}
where $t_i$ is the initial point in the block of $b$ values.  By using again the convergence of the quadrature of a bounded and continuous function to its integral, we have that $\frac{1}{b}\sum_{j=i}^{i+b-1}\left[s\left(\frac{j-i+1}{b}\right)\right]^2\rightarrow\int_0^1\left(s^b_i(t)\right)^2dt$ as $n\rightarrow\infty$, $b\rightarrow\infty$ and $b/n\rightarrow 0$. The quantity $s_i^b(\cdot)$ denotes the portion of the signal in the block of $b$ values in $(0,1)$ with $i$ the index for the initial point. Note that  $b/n\rightarrow 0$,  and by  the mean value theorem $\int_0^1\left(s_i^b(t)\right)^2dt\rightarrow s^2(t_i)$.  By using, again, the first part of this proof and by (\ref{eq:sb}), we have that $\tau_{b_1}\left(\widehat{SNR}_{n,b,I_i}-SNR_{I_i}\right)$ has the same asymptotic distribution as $\frac{\tau_{b_1}}{\sigma_{\varepsilon}^2}\left(\hat V_{n,b_1,I_i}-\sigma_{\varepsilon}^2\right)$. Now we  study the quantity $S_3$. First, we show that
\begin{equation}
\label{eq:Nb}
\frac{1}{n-b+1}\sum_{i=1}^{n-b+1}\uno\left\{\tau_{b_1}\left(SNR_i-SNR\right)>x\right\}\rightarrow 0
\end{equation}
when $n\rightarrow\infty$ with some $x>0$. Since $SNR_i-SNR=10\log_{10}\left(\frac{s^2(t_i)}{\int s^2(t)dt}\right)$, the equation in (\ref{eq:Nb}) becomes
\[
\frac{1}{n-b+1}\sum_{i=1}^{n-b+1}\uno\left\{\frac{s^2(t_i)}{\int s^2(t)dt}>10^{\frac{x}{10\tau_{b_1}}}\right\}.
\]
We have that
\begin{equation}
\label{eq:s2R}
\frac{1}{n-b+1}\sum_{i=1}^{n-b+1}s^2(t_i)=\int s^2(t)dt+O(n^{-1}).
\end{equation}
Moreover, $\frac{s^2(t_i)}{\int s^2(t)dt}>10^{\frac{x}{10\tau_{b_1}}}$ can be written as 
\[
\tau_{b_1}\left( \frac{s^2(t_i)}{\int s^2(t)dt}-1\right)>\tau_{b_1}\left(10^{\frac{x}{10\tau_{b_1}}}-1\right).
\]
Summing  over the index $i$ and dividing by $n-b+1$, we can write
\[
 \tau_{b_1}\left( \frac{\frac{1}{n-b+1}\sum_{i=1}^{n-b+1}s^2(t_i)}{\int s^2(t)dt}-1\right)>\tau_{b_1}\left(10^{\frac{x}{10\tau_{b_1}}}-1\right).
\]
Since $\tau_{b_1}\left(10^{\frac{x}{10\tau_{b_1}}}-1\right)\rightarrow c>0$ when $b_1\rightarrow\infty$, by using equation (\ref{eq:s2R}) we obtain 
\[
\tau_{b_1}\left( \frac{\frac{1}{n-b+1}\sum_{i=1}^{n-b+1}s^2(t_i)}{\int s^2(t)dt}-1\right)=O(\tau_{b_1}n^{-1})\rightarrow 0\qquad \mbox{ as }\quad n\rightarrow\infty.
\]
Therefore  $\frac{N_n^b}{n-b+1}\rightarrow 0$ as $n\rightarrow\infty$, where $N_n^b=\sum_{i=1}^{n-b+1}\uno\left\{\frac{s^2(t_i)}{\int s^2(t)dt}>10^{\frac{x}{10\tau_{b_1}}}\right\}$. Then, (\ref{eq:Nb}) is shown.
}
\REV{
As in the proof of Slutsky's Theorem, we  split $\mathbb{Q}_n(x)$ as the sum of three empirical distribution function computed over  $S_1$, $S_2$ and $S_3$ respectively.  
Here the random variables $I_i$ are treated as in the proof of Theorem \ref{th:conv_var_dist}.
} 

\REV{%
Based on the argument above only the component of $\mathbb{Q}_n(x)$ computed over $S_1$ has a non degenerate limit distribution, and this will the same as the asymptotic distribution of the estimator for the variance of the error term.  The proof is now completed.%
} 
\qed



\bibliographystyle{chicago}
\bibliography{REFS}%
\end{document}